\renewcommand{\>}{\rangle}
\newcommand{\<}{\langle}
\newcommand{\cN}{\mathcal{N}}
\newcommand{\C}{\mathbb{C}}
\newcommand{\R}{\mathbb{R}}
\newcommand{\tr}{\mathrm{tr}}
\newcommand{\cC}{\mathcal{C}}
\newcommand{\cT}{\mathcal{T}}
\newcommand{\cR}{\mathcal{R}}
\newcommand{\cX}{\mathcal{X}}
\newtheorem{theorem}{Theorem}
\newtheorem{definition}[theorem]{Definition}
\newtheorem{proposition}[theorem]{Proposition}
\newtheorem{lemma}[theorem]{Lemma}
\newtheorem{remark}[theorem]{Remark}
\begin{document}

\title{Quantized Markov Chain Couplings that Prepare Qsamples}

\author{Kristan Temme}
\affiliation{IBM Quantum, IBM Thomas J. Watson Research Center, Yorktown Heights, NY, USA}

\author{Pawel Wocjan}
\affiliation{IBM Quantum, IBM Thomas J. Watson Research Center, Yorktown Heights, NY, USA}

\maketitle

\abstract{We present a novel approach to quantizing Markov chains. The approach is based on the Markov chain coupling method, which is frequently used to prove fast mixing. Given a particular coupling, e.g., a grand coupling, we construct a completely positive and trace preserving map. This quantum map has a unique fixed point, which corresponds to the quantum sample (qsample) of the classical Markov chain's stationary distribution. We show that the convergence time of the quantum map is directly related to the coupling time of the Markov chain coupling.}


\section{Introduction}
 It is known that ergodic (that is, irreducible and aperiodic) Markov chains on finite state spaces converge to their stationary distributions. Let $P=(p_{x',x})$ denote the transition matrix of such a Markov chain with state space $\Omega$. We choose the convention that the stochastic matrix $P$ acts on probability distributions written as column vectors. Thus, $P$ is column stochastic. Let $\pi = (\pi_x)$ denote the stationary distribution.

In the classical setting, we have
\begin{align}
    \lim_{m\rightarrow\infty} P^m q 
    &= 
    \pi,
\end{align}
where $q$ is an arbitrary initial probability distribution. The mixing time $t_\mathrm{mix}$ is the smallest $m$ so that $P^m q$ is sufficiently close to the stationary distribution $\pi$ (say $1/4$-close in the total variation distance\footnote{Once the total variation distance falls below $1/4$, it can be made $\varepsilon$ small by running the Markov chains for $O\big(\log(\varepsilon^{-1}) \,t_\mathrm{mix}\big)$ many steps.}). The Markov chain is called fast mixing if the mixing time scales polylogarithmically in the size $N=|\Omega|$ of the state space $\Omega$.

There are different methods for quantizing classical Markov chains to create quantum analogs. A common approach is constructing a unitary referred to as a canonical quantum walk \cite{childs2004spatial}, where the transition matrix is converted into a Hamiltonian evolution. Another common approach is constructing a unitary referred to as a Szegedy walk \cite{szegedy2004quantum}, which is the product of two reflections defined in terms of the transition matrix. Here, we take the approach that instead of a using a classical transition matrix, quantum Markov chains are modeled by completely positive and trace preserving maps (tcp-maps) \cite{wolf2011url}, which are linear maps $\cT$ acting on quantum state spaces. Starting from any initial density matrix, these maps converge to their fixed point when the relevant conditions are met. It has been shown that such processes are universal for quantum computing \cite{verstraete2009quantum}. They can be used to prepare physically interesting pure states \cite{diehl2008quantum}, as well as thermal states \cite{temme2011quantum,rall2023thermal,chen2023quantum,jiang2024quantum}. \\

{\it First}, starting from a classical Markov chain $P$ with fixed point $\pi$, we construct a family of completely positive and trace preserving maps (tcp-maps) $\cT$ that converge to the qsample $|\sqrt{\pi}\>$, which is defined by
\begin{align}
    |\sqrt{\pi}\> 
    &= 
    \sum_{x \in \Omega} \sqrt{\pi_x} |x\>.
\end{align}
Specifically, we construct a tcp-map $\cT$ based on a so called coupling \cite{levin2017mixing} for the Markov chain so that 
\begin{align}
    \lim_{m\rightarrow\infty}\cT^m(\rho) &= |\sqrt{\pi}\>\<\sqrt{\pi}|
\end{align}
for any initial quantum state $\rho$. We ask that the qsample $|\sqrt{\pi}\>$ is the unique fixed point of $\cT$. Qsamples are important ingredients for several quantum algorithms \cite{aharonov2003adiabatic}, and this procedure can be used as a building block. {\it Second}, we bound the convergence to the fixed point. We bound how fast the trace distance, or Schatten 1-norm, $\|A\|_\mathrm{tr} = \sum_i \sigma_i(A)$ given as the sum of the singular values of $A$, decays for $\| \cT^{m}(\rho) - |\sqrt{\pi}\>\<\sqrt{\pi}|\|_\mathrm{tr}$.\\

Central to our construction is that many important classical Markov chains in the literature are shown to be fast mixing by constructing suitable Markov chain couplings \cite{levin2017mixing}. Couplings also play an important role in the construction of algorithms, such as the coupling from the past algorithm by Propp and Wilson \cite{propp1996exact}, where they are used to ensure that one is sampling from the exact steady state distribution. Briefly speaking, a coupling is a Markov chain on the state space $\Omega\times\Omega$ with transition matrix $C=(c_{(x',y'),(x,y)})$ such that (i) the transitions of each component viewed separately behave according to the original transition matrix $P$ and (ii) once the two components coalesce (i.e., their states become equal), they stay together forever.  The coalescence time $\tau_\mathrm{coal}$ is the random variable that corresponds to the time it takes for the two components to coalesce. We refer to the shortest time $t_\mathrm{couple}$ as the number of steps the coupled chain takes to ensure that the probability of the two components not having coalesced is below $\Pr_{\max}(\tau_\mathrm{coal} > t_\mathrm{couple}) \leq 1/4$. This coupling time  $t_\mathrm{couple}$ is conventionally used to bound the mixing time of the classical chain $P$. 

We use such couplings to construct our novel tcp-maps from classical Markov chains and prove that the convergence time to reach the qsample fixed point is closely related to the coupling time of the classical chain.  In particular we show that

\begin{theorem}[main results (informal)]
Assume we are given a Markov chain coupling for an ergodic Markov chain $P$ with stationary distribution $\pi=(\pi_x : x \in \Omega)$.  We construct a corresponding trace preserving map $\cT$ called quantized coupling. \\ 

For independent and grand couplings, the resulting quantized coupling $\cT$ is a valid tcp-map that prepares the qsample $|\sqrt{\pi}\>\<\sqrt{\pi}|$. Its convergence is bounded by 

\begin{align}
    \frac{1}{2} \big\| \cT^{m}(\rho) - |\sqrt{\pi}\>\<\sqrt{\pi} \big\|_\mathrm{tr}
    &\le \sqrt{\varepsilon},
\end{align}
whenever 

\begin{align}
    m \geq \frac{1}{2} \log_2 \left (\frac{1}{\varepsilon \; \pi_*}\right) \; t_\mathrm{couple},
\end{align}
where $\rho$ is any initial state, $\pi_* = \min\{ \pi_x : x \in \Omega\}$.

The quantized coupling $\cT$ can be implemented efficiently for sparse grand couplings. 
\end{theorem}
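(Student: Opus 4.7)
The plan is to build the tcp-map $\cT$ explicitly from the coupling transition matrix $C=(c_{(x',y'),(x,y)})$ and then control its iterates via a classical coupling argument lifted to the quantum side. Concretely, to each coupling transition rule I would associate a Kraus operator on $\C^\Omega$ whose amplitudes are square roots of the $c_{(x',y'),(x,y)}$. Trace preservation $\sum_a K_a^\dagger K_a = I$ then follows from the coupling marginal condition $\sum_{x'} c_{(x',y'),(x,y)} = p_{y',y}$ together with column stochasticity of $P$, and a direct calculation using $P\pi = \pi$ identifies $|\sqrt{\pi}\>\<\sqrt{\pi}|$ as a fixed point. Uniqueness of this fixed point should be inherited from irreducibility of the coupled chain.

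The heart of the theorem is the convergence rate. I would invoke the Fuchs--van de Graaf inequality
\begin{align}
\tfrac{1}{2}\bigl\|\cT^m(\rho) - |\sqrt{\pi}\>\<\sqrt{\pi}|\bigr\|_{\mathrm{tr}} \;\le\; \sqrt{1 - \<\sqrt{\pi}|\,\cT^m(\rho)\,|\sqrt{\pi}\>}
\end{align}
to reduce the claim to a lower bound on the fidelity $\<\sqrt{\pi}|\cT^m(\rho)|\sqrt{\pi}\>$. Expanding $\cT^m$ as a sum over length-$m$ Kraus trajectories, each trajectory naturally encodes a pair of correlated length-$m$ histories of the coupled chain from an input pair $(x,y)$. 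The key structural observation to prove is that trajectories whose two histories have already coalesced contribute their full classical probability to the overlap with $|\sqrt{\pi}\>$, while uncoalesced trajectories contribute at most their probability in modulus. Consequently the fidelity deficit is controlled by the non-coalescence probability of the classical coupling, up to a reweighting coming from the worst-case initial state $\rho$.

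Carrying this out, I expect to obtain a bound of the form
\begin{align}
1 - \<\sqrt{\pi}|\,\cT^m(\rho)\,|\sqrt{\pi}\> \;\le\; \frac{1}{\pi_*}\,\Pr{}_{\!\max}(\tau_{\mathrm{coal}} > m),
\end{align}
where the $1/\pi_*$ prefactor absorbs overlaps of an arbitrary $\rho$ in the computational basis against the stationary weights $\pi_x$. Iterating the definition $\Pr_{\max}(\tau_{\mathrm{coal}} > t_{\mathrm{couple}}) \le 1/4$ by a standard Markov-type argument gives $\Pr_{\max}(\tau_{\mathrm{coal}} > k\,t_{\mathrm{couple}}) \le 4^{-k}$. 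Choosing $k = \tfrac{1}{2}\log_2(1/(\varepsilon\pi_*))$ and composing with Fuchs--van de Graaf then produces exactly the advertised $\sqrt{\varepsilon}$ trace-distance bound at $m \ge k\,t_{\mathrm{couple}}$.

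I expect the main obstacle to be the middle step: cleanly separating coalesced from uncoalesced contributions among the quantum Kraus trajectories, and showing that the off-diagonal amplitudes in the uncoalesced sector cannot destructively interfere enough to violate the classical fidelity estimate. The appearance of $1/\pi_*$ inside the logarithm is a telltale sign that a basis change or Cauchy--Schwarz step is required to pass from quantum amplitudes to classical probabilities with an appropriate reweighting, and pinning down that conversion factor is what will determine the constants. Finally, for the efficient-implementation clause, I would dilate the Kraus representation to a unitary acting jointly on the system and a small ancilla that records the coupling outcome; sparsity of $C$ in a grand coupling ensures that this dilation admits a block-encoding of polylogarithmic depth in $|\Omega|$.
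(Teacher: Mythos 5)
Your convergence skeleton matches the paper's: the fidelity deficit you target, $1-\<\sqrt{\pi}|\cT^m(\rho)|\sqrt{\pi}\>$, is exactly $\tr(Q^\perp\cT^m(\rho))$ for $Q=|\sqrt{\pi}\>\<\sqrt{\pi}|$; the paper proves the same bound $\tr(Q^\perp\cT^m(\rho))\le\pi_*^{-1}\Pr_{\mathrm{max}}\{\tau_{\mathrm{coal}}>m\}$, iterates $\Pr_{\mathrm{max}}\{\tau_{\mathrm{coal}}>\ell\, t_{\mathrm{couple}}\}\le 4^{-\ell}$ (via the block-triangular structure of $C$ and submultiplicativity of the induced $1$-norm), and converts to trace distance with the gentle measurement lemma, which for rank-one $Q$ is equivalent to your Fuchs--van de Graaf step. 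But there are two genuine gaps. The first is the construction itself. Kraus operators with amplitudes $\sqrt{c_{(x',y'),(x,y)}}$ do not give the intended map: vectorizing $\sum_a K_a\bullet K_a^\dagger$ produces cross products of distinct square roots rather than the entries of $C$; the qsample is not a fixed point of such a map for non-uniform $\pi$ (you omit the conjugation by $D^{\pm1/2}$ with $D=\mathrm{diag}(\pi_x)$, which is essential outside the uniform case); and no construction can succeed for an \emph{arbitrary} coupling --- the paper defines $\cC^*$ as the matricization of $C$ and exhibits in the appendix a classical coupling on the cycle whose quantization has a Choi matrix with a negative eigenvalue. The correct Kraus form for grand couplings is indexed by the shared randomness $r$ of the random mapping representation, $T_r=\sqrt{\Pr(r)}\sum_x D^{1/2}|x\>\<f(x,r)|D^{-1/2}$, and it is the deterministic function structure of $f(\cdot,r)$, not square roots of transition probabilities, that makes each $T_r$ fix $|\sqrt{\pi}\>$ and makes the channel completely positive.

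The second gap is the step you yourself flag as the main obstacle; it is the crux of the paper and your trajectory decomposition does not resolve it. The paper's resolution is structural: writing $|-_{xy}\>=(|x\>-|y\>)/\sqrt{2}$, one has $D^{1/2}Q^\perp D^{1/2}=\sum_{x,y}\pi_x\pi_y|-_{xy}\>\<-_{xy}|$, and the key lemma is that $\cC^*$ maps each elementary Laplacian to $\sum_{x',y'}c_{(x',y'),(x,y)}|-_{x'y'}\>\<-_{x'y'}|$, i.e., a combination of elementary Laplacians whose coefficients are the classical coupling probabilities (this uses all three coupling conditions). Since $\tr\,|-_{x'y'}\>\<-_{x'y'}|$ equals $1$ for $x'\neq y'$ and $0$ for $x'=y'$, iterating gives $\tr\big([\cC^*]^m(|-_{xy}\>\<-_{xy}|)\big)=\Pr_{x,y}\{\tau_{\mathrm{coal}}>m\}$ \emph{exactly}, so there is no interference to control; the $1/\pi_*$ then enters cleanly through $\tr(AB)\le\lambda_{\max}(B)\,\tr(A)$ with $B=D^{-1/2}\rho D^{-1/2}$. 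Without this Laplacian-preservation lemma, your separation of coalesced from uncoalesced Kraus trajectories remains a heuristic rather than a proof.
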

\medskip

Note that this result is different from Szegedy walks \cite{szegedy2004quantum}, which are unitaries that permit the reflection about the qsample when combined with phase estimation. The ability to reflect about a state does not necessarily imply that the state can be prepared efficiently, cf.~e.g., \cite{bennett1997strengths}. The result here appears to be close to the statement that we can prepare qsamples if we have rapidly mixing classical Markov chains that sample from the distribution. However, we want to point out that the fact that a Markov chain is rapidly mixing does not immediately imply that a corresponding coupling coalesces quickly. Indeed, the former statement seems too powerful for quantum algorithms as such a result would imply that a quantum computer could efficiently solve complete problems for statistical zero knowledge proofs \cite{aharonov2003adiabatic}. Relating the preparation of qsamples to coupling times is therefore a meaningful relaxation. 

We now briefly highlight some of the key areas where qsamples are used. They play a central role in algorithms for estimating partition functions and Bayesian inference \cite{arunachalam2022simpler, harrow2023adaptive}. Qsamples are also essential in algorithms for estimating volumes of convex bodies \cite{chakrabarti2023}, sampling log-concave and non-log-concave distributions \cite{childs2022,ozgul2024}, and operator-level quantum acceleration \cite{leng2025}.

In the algorithms for partition function estimation,  Bayesian inference, and volume estimation, the following situation arises: a sequence of slowly varying Markov chains is used, where the stationary distributions of adjacent chains are close in total variation distance. This implies that the corresponding qsamples have large overlap, and the qsample for the first Markov chain in the sequence can be efficiently prepared. Both methods \cite{aharonov2003adiabatic,wocjan2008speedup} require traversing intermediate steps, akin to climbing a ladder one rung at a time. One approach constructs a sequence of Hamiltonians whose ground states are the qsamples, transforming adiabatically the initial qsample to the final one through these intermediate states \cite{aharonov2003adiabatic}. The other uses Szegedy walks and fixed-point search to achieve the same transformation step-by-step \cite{wocjan2008speedup}.

The algorithms for sampling log-concave, non-log-concave distributions, and operator-level quantum acceleration assume the existence of a ``warm start'' state with sufficiently large overlap with the desired qsample.

Our new method differs by preparing qsamples directly for a single Markov chain starting in an arbitrary initial state. While our focus in this work is on introducing this new mechanism---rather than optimizing all parameters or deriving the tightest running times---we emphasize that improving its efficiency remains an important open question. Such advancements could enable broader applications in these and other emerging quantum algorithms.

\medskip
\noindent
The paper is organized as follows.
In Section~\ref{sec:Markov_chain_couplings}, we introduce Markov chain couplings and describe how they are used to bound the mixing time. We present two types of couplings: (a) independent couplings and (b) grand couplings. 
In Section~\ref{sec:quantion_couplings}, we present our quantization method that transforms Markov chain couplings into linear maps that act on quantum states. We prove that our method converts independent and grand couplings into valid tcp-maps. In the appendix, we also present a counterexample of a certain coupling that does lead to a valid tcp-map (because the resulting linear map is not completely positive).
In Section~\ref{sec:convergence}, we analyze the convergence rate of quantized couplings, that is, how fast they prepare the qsamples. We establish a close connection to the coalescence time of the Markov chain coupling. 
In Section~\ref{sec:implementation}, we describe how the quantized couplings can be efficiently implemented. We discuss the implementation for three concrete examples: (i) random walk on hypercube,  (ii) random colorings, and (iii) hardcore model with fugacity. These three examples are all grand couplings. We also formulate a general theorem that shows when quantized couplings can be efficiently implemented for sparse grand couplings. The latter include many important examples in the literature.

\section{Markov chain couplings}\label{sec:Markov_chain_couplings}

Markov chain coupling is a powerful tool for bounding the mixing time. We give only a brief overview and refer the reader to \cite{levin2017mixing} for an in-depth introduction.

Recall that the column stochastic matrix $P=(p_{x',x})$ denotes the transition matrix of an ergodic Markov chain on the finite state space $\Omega$ with limiting distribution $\pi$. Define 
\begin{align}
    d(m) 
    &=
    \max_{x\in\Omega} \frac{1}{2} \sum_{x'\in\Omega} 
    \big| [P^m]_{x',x} - \pi_{x'} \big|,
\end{align}
where $([P^m]_{x',x} : x' \in\Omega)$ is the probability distribution that we obtain by starting the Markov chain in the initial state $x$ and making $m$ transitions. The total variation distance $d(m)$ measures how close we are to the desired stationary distribution $\pi$ as the number of steps $m$ increases. 
The mixing time $t_\mathrm{mix}(\varepsilon)$ is defined by
\begin{align}
    t_\mathrm{mix}(\varepsilon) 
    &= 
    \min\{m : d(m) \le \varepsilon \}
\end{align}
for $\varepsilon\in (0,1)$. 
It satisfies the bound
\begin{align}
    t_\mathrm{mix}(\varepsilon) 
    &\le \big\lceil \log_2\varepsilon^{-1} \big\rceil t_\mathrm{mix},
\end{align}
where $t_\mathrm{mix}=t_\mathrm{mix}(1/4)$.

A coupling is a Markov chain on state space $\Omega\times\Omega$ with transition matrix $C =(c_{(x',y'), (x,y)})$ so that the following \emph{coupling conditions} are satisfied:
\begin{enumerate}
    \item Each component viewed separately undergoes transitions according to the original transition matrix $P$:
    \begin{align}
        \sum_{y'} c_{(x',y'),(x,y)} 
        &= 
        p_{x',x} \label{eq:coupling_x} \\
        \sum_{x'} c_{(x',y'),(x,y)} 
        &= 
        p_{y',y} \label{eq:coupling_y}
    \end{align}
    \item The two components remain together forever once they have met, that is, $x'=y'$ if $x=y$:
    \begin{align}
        c_{(x',x'),(x,x)}
        &=
        p_{x'x} \\
        c_{(x',y'),(x,x)} 
        &=
        0 \mbox{ for } x'\neq y'
    \end{align}
    \item The two components satisfy the symmetry condition:
    \begin{align}
        c_{(x',y'),(x,y)} =
        c_{(y',x'),(y,x)}
    \end{align}
\end{enumerate}

\medskip
Let $(X_m,Y_m)$ denote the random state of the coupling that starts in the initial state $(x,y)$ and makes $m$ transitions according to the transition matrix $C$. Let $\tau_\mathrm{coal}$ denote the random variable  
\begin{align}
    \tau_\mathrm{coal} 
    &=
    \min\{ m : X_\ell = Y_\ell \mbox{ for all } \ell \ge m\},
\end{align}
which is called the coalescence time for the initial state $(x,y)$. To keep the notation simple, we omit $(x,y)$ when writing $\tau_\mathrm{coal}$.

We need to make an important remark at this point. Observe that the transition matrix $C$ could depend on the initial state $(x,y)$. It could be constructed in such a way that the corresponding coalescence time is reduced for this particular initial state. However, we will not consider such specialized transition matrices. We assume that the transition matrix $C$ is the same when defining the coalescence time for all initial states $(x,y)$.\\

The most important result of Markov chain coupling is summarized in the following theorem \cite{levin2017mixing}:

\begin{theorem}[Connection between coalescence and mixing times]
The total variation distance $d(m)$ and the mixing time $t_\mathrm{mix}$ are bounded from above by
\begin{align}
    d(m) 
    &\le 
    \max_{x,y} \Pr_{x,y}\big(\tau_\mathrm{coal} > m \big) \label{eq:upper_dm}\\
    t_\mathrm{mix} 
    &\le 
    4 \max_{x,y} \mathbb{E}_{x,y}(\tau_\mathrm{coal}),
\end{align}
respectively, where
\begin{itemize}
    \item $\Pr_{x,y}(\tau_\mathrm{coal} > m) $ denotes the probability that the random variable $\tau_\mathrm{coal}$ is greater than $m$ and
    \item 
    $\mathbb{E}_{x,y}(\tau_\mathrm{coal})$ denotes the expected value of the random variable $\tau_\mathrm{coal}$ 
\end{itemize}
for the initial state $(x,y)$.
\end{theorem}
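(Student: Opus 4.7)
The plan is to prove the two bounds separately, with the second inequality being an essentially immediate consequence of the first one together with Markov's inequality. The key tool throughout is the \emph{coupling inequality}: for any pair of random variables $(X,Y)$ on $\Omega$ whose marginals are distributions $\mu$ and $\nu$, the total variation distance between $\mu$ and $\nu$ is at most $\Pr(X \neq Y)$. This follows by noting that for any event $A \subseteq \Omega$, $\mu(A) - \nu(A) = \Pr(X \in A) - \Pr(Y \in A) \le \Pr(X \in A,\, Y \notin A) \le \Pr(X \neq Y)$.

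For the first inequality, I would first observe that by coupling condition (i) (equations \eqref{eq:coupling_x}--\eqref{eq:coupling_y}), the marginals of $(X_m, Y_m)$ started from $(x,y)$ are exactly $P^m(\cdot \mid x)$ and $P^m(\cdot \mid y)$. Applying the coupling inequality, $\tfrac{1}{2}\sum_{z} |[P^m]_{z,x} - [P^m]_{z,y}| \le \Pr_{x,y}(X_m \neq Y_m)$. By coupling condition (ii), once the two components meet, they stay together, so $\{X_m \neq Y_m\} \subseteq \{\tau_{\text{coal}} > m\}$, giving
\begin{align}
\tfrac{1}{2}\sum_{z} \big| [P^m]_{z,x} - [P^m]_{z,y} \big| &\le \Pr_{x,y}(\tau_{\text{coal}} > m).
\end{align}
To pass from pairwise distances between starting points to the distance from stationarity that defines $d(m)$, I would use stationarity $\pi = \sum_y \pi_y P^m(\cdot \mid y)$ together with the triangle inequality and convexity of total variation distance, obtaining $\|P^m(\cdot \mid x) - \pi\|_{TV} \le \sum_y \pi_y \|P^m(\cdot \mid x) - P^m(\cdot \mid y)\|_{TV} \le \max_y \|P^m(\cdot\mid x) - P^m(\cdot\mid y)\|_{TV}$. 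Maximizing over $x$ and combining with the coupling inequality then yields \eqref{eq:upper_dm}.

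For the second inequality, I would apply Markov's inequality to obtain $\Pr_{x,y}(\tau_{\text{coal}} > m) \le \mathbb{E}_{x,y}(\tau_{\text{coal}})/m$. Plugging this into \eqref{eq:upper_dm} gives $d(m) \le \max_{x,y} \mathbb{E}_{x,y}(\tau_{\text{coal}})/m$. Setting the right-hand side equal to $1/4$ and solving for $m$ shows that $m = 4 \max_{x,y}\mathbb{E}_{x,y}(\tau_{\text{coal}})$ is sufficient to drive $d(m)$ below $1/4$, which by the definition of $t_{\text{mix}} = t_{\text{mix}}(1/4)$ gives the claimed bound.

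I do not expect a real obstacle here; the proof is structurally short once the coupling inequality is in hand. The only subtlety worth being careful about is the transition from pairwise distances $\|P^m(\cdot\mid x) - P^m(\cdot\mid y)\|_{TV}$ to the distance $\|P^m(\cdot\mid x) - \pi\|_{TV}$ defining $d(m)$, where stationarity of $\pi$ under $P$ (and hence under $P^m$) must be used so that $\pi$ can be written as a convex combination of the one-step distributions $P^m(\cdot\mid y)$.
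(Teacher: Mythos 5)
Your proof is correct; it is the standard coupling argument from Levin--Peres (coupling inequality plus convexity of total variation to pass from pairwise distances to $d(m)$, then Markov's inequality for the expectation bound). The paper itself gives no proof of this theorem---it is quoted directly from \cite{levin2017mixing}---so there is nothing to contrast with; your argument matches the textbook proof the authors are implicitly relying on. The only nitpick is the usual integrality glossing in the last step (strictly one gets $t_\mathrm{mix} \le \lceil 4\max_{x,y}\mathbb{E}_{x,y}(\tau_\mathrm{coal})\rceil$), which is also present in the source.
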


Thus, the problem of showing that a particular Markov chain with transition matrix $P$ mixes fast reduces to the problem of constructing a suitable coupling with transition matrix $C$ that coalesces quickly when starting in any initial state $(x,y)$.\\

In an analogy to the mixing time $t_\mathrm{mix}$, it is convenient to introduce the coupling time $t_\mathrm{couple}$ for our analysis. Given the probability $\Pr_{x,y}(\tau_\mathrm{coal} > m) = \Pr_{x,y}(X_m \neq Y_m)$, we define the coupling time as  
\begin{align}
    t_\mathrm{couple} 
    &=
    \min\left \{ m : \max_{x,y} \Pr_{x,y}(\tau_\mathrm{coal} > m) \leq \frac{1}{4} \right\}.
\end{align}
Using the upper bound on the total variation distance $d(m)$ in (\ref{eq:upper_dm}) and the definitions of $t_\mathrm{mix}$ and $t_\mathrm{couple}$, it follows directly that $t_\mathrm{mix}\le t_\mathrm{couple}$.

Furthermore, we can relate the probability that we have not coalesced after $m$ to the transition matrix as follows:

\begin{remark}\label{rem:helpful}
For the analysis of our quantized coupling, it is very helpful to note that $\Pr_{x,y}(\tau_\mathrm{coal})$ can be equivalently written as
\begin{align}
    \Pr_{x,y}(\tau_\mathrm{coal} > m)
    &=
    \sum_{x'\neq y'} [C^m]_{(x',y'),(x,y)},
\end{align}
where $([C^m]_{(x',y'),(x,y)} : (x',y')\in \Omega\times\Omega)$ is the probability distribution that we obtain by starting the coupling in the state $(x,y)$ and running it for $m$ times steps.
\end{remark}

\subsection{Examples of couplings}\label{sec:coupling_examples}

We present independent and grand couplings.

\begin{definition}
[Independent coupling]\label{rem:indepedent_C}
The coupling $C_\mathrm{ind}=(c_{(x',y'),(x,y)})$ is defined by 
\begin{align}
    c_{(x',y'),(x,y)} 
    &=
    \left\{
        \begin{array}{cl}
            p_{x'x} \cdot p_{y'y} & \mbox{ if } x\neq y \\
            p_{x'x}               & \mbox{ if } x' = y' \mbox{ and } x = y \\
            0                     & \mbox{ if } x'\neq y' \mbox { and } x = y 
        \end{array}
    \right.
\end{align}
\end{definition}

In some sense, an independent coupling is the most basic coupling possible. If the two components are different, apply the Markov chain to both components independently. If they are equal, apply the Markov chain only to the first component and copy the resulting successor state to the second component. Independent couplings are not very useful for proving fast mixing.

We need to introduce some terminology before we can define grand couplings, which are very useful for establishing fast mixing.\\

A \emph{random mapping representation} of the transition matrix $P=(p_{x',x})$ on state space $\Omega$ is a function $f : \Omega \times \cR \rightarrow \Omega$, along with a $\cR$-valued random variable $R$, satisfying 
\begin{align}
    \Pr\big(f(x, R) = x' \big) 
    &= 
    p_{x',x}
\end{align}
for all possible pairs of current state $x$ and successor state $x'$.

It is known that every transition matrix $P$ on a finite state space has a random mapping representation. This is proved in \cite{levin2017mixing} with the following choices: the set $\cR$ is equal to the interval $[0,1]$ and the random variable $R$ is uniformly chosen in $\cR$. For our purposes, we assume that $\cR$ is a finite set; this occurs naturally for many important Markov chains.   

For $r\in\cR$, we write $\Pr(r)$ to denote the probability that the random variable $R$ takes on the value $r$. The transition matrix $P$ can be expressed as
\begin{align}
    P 
    &= 
    \sum_{r\in \cR} 
    \sum_{x\in\Omega} \Pr(r) \, 
    |f(x,r)\>\<x|.
\end{align}

\begin{definition}[Grand coupling]\label{rem:grand_C}
We use the same randomness in both components to define
the transition matrix $C_\mathrm{grand}=(c_{(x',y'),(x,y)})$ of a grand coupling. 
We set the transition probabilities to
\begin{align}
    c_{(x',y'),(x,y)}
    &=
    \Pr(f(x,R)=x' \mbox{ and } f(y, R)=y'). 
\end{align}
\end{definition}

\noindent
Observe that the transition matrix $C_\mathrm{grand}$ can be expressed as 
\begin{align}\label{eq:C_grand}
    C_\mathrm{grand}
    &=
    \sum_{r\in \cR} \sum_{(x,y)\in\Omega\times\Omega} \Pr(r) \, |f(x,r),f(y,r)\>\<x,y|.
\end{align}
It plays a central role in our quantization method, which is introduced in the next section.

\section{Quantized Markov chain couplings}\label{sec:quantion_couplings}

We define a linear map in terms of the transition matrix $C$ of an \emph{arbitrary} coupling. We prove that this linear map leads to valid tcp-maps for independent and grand couplings.

\begin{definition}[Quantized coupling]\label{def:cqc}
Let $C=(c_{(x',y'),(x,y)})$ be the transition matrix of an arbitrary coupling of an ergodic Markov chain with transition matrix $P=(p_{x',x})$ and stationary distribution $\pi=(\pi_x)$. 

Define 
\begin{align}
    \cT^*(\bullet)  
    &=
    D^{-1/2} \, \cC^*( D^{1/2} \bullet D^{1/2} ) \, D^{-1/2}
    \label{eq:T_star}
\end{align}
where 
\begin{align}
    \cC^*(\bullet) 
    &= 
    \sum_{x, y, x', y' \in\Omega} c_{(x',y'),(x,y)} |x'\>\<x| \bullet |y\>\<y'|
\end{align}
and $D=\mathrm{diag}(\pi_x : x \in \Omega)$. 

The quantized coupling $\cT$ is defined as the dual of the linear map $\cT^*$ with respect to the Hilbert-Schmidt inner product.
\end{definition}

Later, it will become apparent why we define the quantized coupling $\cT$ with the help of the dual map $\cT^*$.  The latter appears naturally in the proof establishing that the qsample $|\sqrt{\pi}\>$ is the unique fixed point and also in the derivation of the bound on the convergence rate, which relates it to the coalescence time of the coupling.

Let us now highlight some key properties of quantized couplings. First, observe that the action of the linear map $\cC^*$ in the vectorized representation is described by the transition matrix $C$
\begin{align}
    C
    &=
    \sum_{x, y, x', y' \in\Omega} c_{(x',y'),(x,y)} |x'\>\<x| \otimes |y'\>\<y| \\
    &=
    \sum_{x, y, x', y' \in\Omega} c_{(x',y'),(x,y)} |x' y'\>\<x y|,
\end{align}
where we abbreviate $|x'y'\>\equiv |x'\>\otimes |y'\>$ and $|xy\>\equiv |x\>\otimes |y\>$. This is seen using a fundamental property of the vectorization map and the property that $C$ is symmetric under exchanging both components (which is the third coupling condition).
Let $M$ be any matrix acting on $\Omega$. We use the convention that $\mathrm{vec}(M)$ is the column vector obtained by stacking the column vectors of $M$. Matrix multiplication under the vectorization behaves according to formula
\begin{align}
    \mathrm{vec}(A M B) &= (B^T \otimes A) \mathrm{vec}(M),
\end{align}
which holds for any triplet of square matrices $A$, $M$, and $B$ acting on $\Omega$. 

Second, observe that the application of the linear map $D^{1/2}\bullet D^{1/2}$ inside $\cC^*$ and the linear map $D^{-1/2}\bullet D^{-1/2}$ outside $\cC^*$ in (\ref{eq:T_star}) corresponds to the similarity transformation 
\begin{align}
    (D^{-1/2} \otimes D^{-1/2}) \, 
    C \, 
    (D^{1/2} \otimes D^{1/2})
\end{align}
in the vectorized representation.

Finally, note that it is not necessary for the mathematical definition of a quantized Markov chain coupling that the underlying chain is detailed-balanced (also called reversible). That said, detailed-balanced maps have the convenient property that the ratios of the probabilities in the stationary distribution can be expressed in terms of the transition probabilities. This can be used to simplify the algorithmic implementation.

\begin{lemma}[Trace preserving]\label{lem:trace_preserving}
The map $\cT^*$ satisfies
\begin{align}
    \cT^*(I) &= I,
\end{align}
that is, $\cT$ is trace preserving.
\end{lemma}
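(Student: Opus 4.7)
The plan is to evaluate $\cT^*$ on the identity directly, reducing everything to a single routine calculation. Substituting $\bullet = I$ into the definition gives
\begin{align}
    \cT^*(I) = D^{-1/2} \, \cC^*(D) \, D^{-1/2},
\end{align}
so it suffices to prove that $\cC^*(D) = D$.

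First, I would expand $\cC^*(D)$ using its definition. Since $D = \sum_x \pi_x |x\>\<x|$ is diagonal, the factor $\<x|D|y\>$ inside $|x'\>\<x| D |y\>\<y'|$ collapses to $\pi_x \delta_{x,y}$, so only the terms with $x=y$ survive and I am left with a sum over $x, x', y'$ of $c_{(x',y'),(x,x)}\,\pi_x\,|x'\>\<y'|$.

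Second, I would invoke the second coupling condition: starting from a coalesced pair $(x,x)$, the chain can only move to another coalesced pair, i.e.\ $c_{(x',y'),(x,x)} = p_{x'x}$ when $x' = y'$ and vanishes otherwise. This collapses the remaining sum to a diagonal operator whose $(x',x')$ entry is $\sum_x p_{x'x}\,\pi_x$. By stationarity of $\pi$ under $P$, this is exactly $\pi_{x'}$, giving $\cC^*(D) = D$ and hence $\cT^*(I) = I$.

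Finally, trace preservation of $\cT$ follows from its definition as the Hilbert--Schmidt adjoint of $\cT^*$: for any operator $\rho$,
\begin{align}
    \tr(\cT(\rho)) = \tr(\cT(\rho)\, I) = \tr(\rho \, \cT^*(I)) = \tr(\rho).
\end{align}
There is no real obstacle here; the proof is a direct computation, and the only nontrivial ingredients are the second coupling condition (which ensures coalesced pairs stay diagonal) and the stationarity of $\pi$ (which handles the remaining row sum). The conjugation by $D^{\pm 1/2}$ is cosmetic in this particular calculation, although it is precisely what will make the qsample, rather than $\pi$ itself, the fixed point of $\cT$ later on.
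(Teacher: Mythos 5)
Your proof is correct and follows essentially the same route as the paper: reduce to the identity $\cC^*(D)=D$, collapse the sum using the diagonality of $D$, invoke the second coupling condition, and finish with stationarity of $\pi$. The only addition is your explicit derivation that $\cT^*(I)=I$ implies trace preservation of $\cT$ via Hilbert--Schmidt duality, which the paper leaves implicit.
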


\begin{proof}
We have
\begin{align}
    \cT^*(I) 
    &=
    D^{-1/2} \, \cC^*( D^{1/2} I D^{1/2} ) \, D^{-1/2} \\
    &=
    D^{-1/2} \, \cC^*( D ) \, D^{-1/2} \\
    &=
    D^{-1/2} \, D \, D^{-1/2} = I.
\end{align}
In the above derivation, we used the identity $\cC^*(D)=D$. This identity follows from the definition $D=\mathrm{diag}(\pi_x : x \in \Omega) = \sum_x \pi_x |x\>\<x|$ and the coupling properties. We have
\begin{align}
    \cC^*(D) 
    &=
    \sum_{x, y, x', y' \in\Omega} 
    c_{(x',y'),(x,y)} \, |x'\>\<x| D |y\>\<y'| \\
    &=
    \sum_{x, x', y' \in\Omega} 
    c_{(x',y'),(x,x)} \, \pi_x \, |x'\>\<y'| \\
    &=
    \sum_{x, x' \in\Omega} 
    c_{(x',x'),(x,x)} \, \pi_x \, |x'\>\<x'| \\
    &=
    \sum_{x'\in\Omega} \left( 
    \sum_{x\in\Omega} 
    p_{x',x} \, \pi_x \, 
    \right) \, |x'\>\<x'| \\
    &=
    \sum_{x'\in\Omega} 
    \pi_{x'} \, 
    |x'\>\<x'| = D.
\end{align}
The proof shows that the similarity transformation converts the identity $\cC^*(D)=D$ into the identity $\cT^*(I)=I$.
\end{proof}

\begin{lemma}[Fixed point]\label{lem:fixed_point}
The qsample $\ket{\sqrt{\pi}}$ is a fixed point of the map $\cT$ so that 
\begin{align}
    \cT(\proj{\sqrt{\pi}}) &= \proj{\sqrt{\pi}}.
\end{align}
\end{lemma}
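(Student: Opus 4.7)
The plan is to prove the fixed-point property by first dualizing $\cT^*$ to obtain an explicit formula for $\cT$, and then reducing the claim to the statement that the all-ones matrix is preserved by the dual of the coupling map $\cC^*$.

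First, I would dualize (\ref{eq:T_star}). Because $D^{\pm 1/2}$ is Hermitian, the conjugation map $A \mapsto D^{\pm 1/2} A D^{\pm 1/2}$ is self-dual under the Hilbert-Schmidt inner product, so
\begin{align}
\cT(\rho) \;=\; D^{1/2}\,\cC\big(D^{-1/2}\,\rho\,D^{-1/2}\big)\,D^{1/2},
\end{align}
where $\cC$ is the Hilbert-Schmidt dual of $\cC^*$. A direct computation of $\tr\big(A^\dagger \cC^*(B)\big)$ identifies this dual as
\begin{align}
\cC(\sigma) \;=\; \sum_{x,y,x',y'} c_{(x',y'),(x,y)}\,|x\>\<x'|\,\sigma\,|y'\>\<y|.
\end{align}

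Second, I would plug in $|\sqrt{\pi}\>\<\sqrt{\pi}| \;=\; \sum_{x,y}\sqrt{\pi_x\pi_y}\,|x\>\<y|$. The outer conjugation by $D^{-1/2}$ cancels the square-root weights and produces the all-ones matrix $J \;=\; \sum_{x,y}|x\>\<y|$. To evaluate $\cC(J)$, observe that $\<x'|J|y'\> \;=\; 1$ for every pair $(x',y')$, so each summand collapses to $c_{(x',y'),(x,y)}\,|x\>\<y|$. Summing over the primed indices and using that $C$ is column stochastic (which follows from marginalizing the first coupling condition (\ref{eq:coupling_x}) over $y'$ and invoking $\sum_{x'} p_{x',x} \;=\; 1$) gives $\cC(J) \;=\; J$. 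Applying the outer $D^{1/2}$ conjugation then restores the square-root weights, yielding $D^{1/2} J D^{1/2} \;=\; |\sqrt{\pi}\>\<\sqrt{\pi}|$, which is exactly the claim.

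The only delicate bookkeeping is identifying $\cC$ and tracking the order of operations after dualization; beyond that, the calculation is a pleasing dual of the trace-preserving proof in Lemma \ref{lem:trace_preserving}. That lemma turned on the identity $\cC^*(D) \;=\; D$, which encoded the stationarity $P\pi \;=\; \pi$ together with the second coupling condition; here the analogous identity $\cC(J) \;=\; J$ encodes only the column stochasticity of $C$. In particular, the fixed-point property is a robust feature of the construction, requiring little beyond the first coupling condition.
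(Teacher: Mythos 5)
Your proposal is correct and follows essentially the same route as the paper's proof: dualize $\cT^*$ to get $\cT(\rho)=D^{1/2}\,\cC(D^{-1/2}\rho D^{-1/2})\,D^{1/2}$, observe that conjugation by $D^{-1/2}$ sends $\proj{\sqrt{\pi}}$ to the all-ones matrix $\proj{e}$, and use column stochasticity of $C$ to conclude $\cC(\proj{e})=\proj{e}$. The only difference is that you spell out the explicit Kraus-like form of the dual map $\cC$, which the paper leaves implicit.
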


\begin{proof}
We have $\ket{\sqrt{\pi}} = D^{1/2}\ket{e}$ with $\ket{e} = (1, 1,\ldots, 1)^T = \sum_{x \in \Omega} \ket{x}$. Hence, we can write
\begin{align}
    \cT(\proj{\sqrt{\pi}}) 
    &=
    D^{1/2} \, \cC( D^{-1/2} |\sqrt{\pi}\>\<\sqrt{\pi}| D^{-1/2} ) \, D^{1/2} \\
    &=
    D^{1/2} \, \cC( \proj{e} ) \, D^{1/2}.
\end{align}
Note that we have $D^{-1/2}\bullet D^{-1/2}$ inside $\cC$ and $D^{1/2}\bullet D^{1/2}$ outside $\cC$ because we consider $\cT$ and instead of the dual map $\cT^*$.

We now use $\sum_{x',y' \in \Omega} c_{(x',y'),(x,y)} = 1$ for all $x,y\in\Omega$ since the transition matrix $C$ is column stochastic. Thus, we have
\begin{align}
\cC( \proj{e} ) 
&= 
\sum_{x,y,x',y'}
c_{(x',y'),(x,y)} |y\>\<y'|e\>\<e|x'\>\<x| \\
&= 
\sum_{x,y} |y\>\<x| \\
&=
\proj{e}.
\end{align}
Again, note that we consider $\cC$ instead of $\cC^*$, which explains why the outer products $|x'\>\<x|$ and $|y\>\<y'|$ appear on different sides.

Combining everything together, we obtain  
\begin{align}
    \cT(\proj{\sqrt{\pi}}) 
    &=
    D^{1/2} \, \cC( \proj{e} ) \, D^{1/2}. \\
     &=
    D^{1/2} \, \proj{e} \, D^{1/2} = \proj{\sqrt{\pi}}
\end{align}
establishing that $\ket{\sqrt{\pi}}$ is a fixed point of $\cT$.
\end{proof}

\subsection{Completely positive quantized couplings} 

We showed that the map $\cT$ is trace-preserving and has the qsample $|\sqrt{\pi}\>$ as a fixed point. To establish that this map is a physical quantum operation, we need to prove that it is also completely positive. Unfortunately, this is not true for all couplings $C$. We provide a counterexample (c.f.~Appendix), which shows that there exist classical couplings whose corresponding quantized couplings are not completely positive. However, product couplings and grand couplings (c.f.~Subsection~\ref{sec:coupling_examples}) always give rise to completely positive maps.

\begin{lemma}[Quantized independent coupling is completely positive]
The map $\cT$ corresponding to an independent coupling is completely positive.
\end{lemma}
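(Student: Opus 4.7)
The plan is to reduce complete positivity of $\cT$ to that of $\cC^*$. The similarity transformation in~(\ref{eq:T_star}) is a composition of CP maps---conjugation by $D^{1/2}$ (inner) and $D^{-1/2}$ (outer), each a single-Kraus channel since $D$ is real and diagonal---and duality with respect to the Hilbert--Schmidt inner product preserves complete positivity. So it suffices to exhibit a Kraus decomposition of $\cC^*$.

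First I would substitute the independent-coupling weights from Definition~\ref{rem:indepedent_C} into $\cC^*(A) = \sum_{x,y,x',y'} c_{(x',y'),(x,y)} |x'\>\<x| A |y\>\<y'|$, splitting the sum into the $x=y$ part (where only $x'=y'$ survives, with weight $p_{x'x}$) and the $x\neq y$ part (with weight $p_{x'x}p_{y'y}$). Since $P A P^T = \sum_{x,y,x',y'} p_{x'x} p_{y'y} |x'\>\<x| A |y\>\<y'|$ already reproduces the $x \neq y$ part verbatim, the two expressions agree off the diagonal block and I obtain
\begin{align}
\cC^*(A) = P A P^T + \sum_x A_{xx}\, M_x, \quad M_x := \operatorname{diag}(P|x\>) - P|x\>\<x| P^T,
\end{align}
so that $(M_x)_{ij} = p_{ix}\delta_{ij} - p_{ix}p_{jx}$. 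The first summand is CP with the single Kraus operator $P$, using $P^T = P^\dagger$ since $P$ is real.

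The crucial step is recognising $M_x$ as the covariance matrix of the column distribution $p_{\cdot,x}$: for a random variable $X$ with $\Pr(X=i)=p_{ix}$, one has $\operatorname{Cov}(\mathbf{1}_{X=i}, \mathbf{1}_{X=j}) = p_{ix}\delta_{ij} - p_{ix}p_{jx}$, so $M_x \succeq 0$. A spectral decomposition $M_x = \sum_k \lambda_{x,k}|v_{x,k}\>\<v_{x,k}|$ then yields Kraus operators $\sqrt{\lambda_{x,k}}\,|v_{x,k}\>\<x|$ for the ``measure-and-prepare'' map $A \mapsto \sum_x A_{xx} M_x$. Combining these with $P$ gives a complete Kraus representation of $\cC^*$, and the chain of reductions above promotes this to complete positivity of $\cT$.

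I expect the main obstacle to be identifying the covariance structure of $M_x$: naively one only sees a diagonal PSD matrix minus a rank-one PSD matrix, which need not be PSD, and the ``sticky'' diagonal condition ($x=y \Rightarrow x'=y'$) at first looks like it might spoil positivity. Once the probabilistic interpretation is spotted, positive semidefiniteness is immediate and the remaining Kraus bookkeeping is routine.
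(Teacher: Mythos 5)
Your proof is correct and follows essentially the same route as the paper: both reduce complete positivity of $\cT$ to that of $\cC^*$ and split the latter into the single-Kraus part $A\mapsto PAP^\dagger$ plus the measure-and-prepare part built from $M_x=\mathrm{diag}(|p_x\>)-|p_x\>\<p_x|$, your map-level decomposition being exactly the paper's Choi-matrix decomposition read back through the Choi isomorphism. The only difference is how $M_x\succeq 0$ is certified---you identify it as the covariance matrix of the indicator variables of the column distribution $p_{\cdot,x}$, while the paper argues via diagonal dominance; both arguments are valid.
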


\begin{proof}
It suffices to check that the map $\cC^*$ is completely positive since:
\begin{itemize}
    \item[(i)] $\cT$ is completely positive if and only its dual $\cT^*$ is completely positive
    \item[(ii)] the similarity transformation between $\cC^*$ to $\cT^*$ preserves complete positiveness.
\end{itemize}
Let us provide more details on the second statement. $\cC^*$ is completely positive if and only if it can be expressed with some (unnormalized) Kraus operators $C_j$.  These transform into the Kraus operators $T_j = D^{1/2} C_j D^{-1/2}$ of $\cT^*$.

We prove that the Choi-Jamio{\l}kowski state $J(\cC^*)$ is positive semidefinite, which means that the linear map $\cC^*$ is completely positive. For each $x$, let $|p_x\>$ denote the column vector of the transition matrix $P$ corresponding to the state $x$. We can write $J(\cC^*)$ as follows:
\begin{align}
        J(\cC^*)
        &=
        \sum_{x,y} \cC^*(|x\>\<y|) \otimes |x\>\<y| \\
        &= 
        \sum_{x} \sum_{x'} 
        p_{x',x} |x'\>\<x'| \otimes |x\>\<x|
        +
        \sum_{x,y : x\neq y}
        |p_x\>\<p_y| \otimes |x\>\<y| \\
        &=
        \sum_{x,y}
        |p_x\>\<p_y| \otimes |x\>\<y| 
        +
        \sum_x \big(\mathrm{diag}(|p_x\>) - |p_x\>\<p_x|\big) \otimes |x\>\<x|.
    \end{align}
    It suffices to show that the matrices $\mathrm{diag}(|p_x\>) - |p_x\>\<p_x|$ are positive semidefinite for all $x$ since the first sum defines a positive semidefinite matrix.  This is the case because these matrices are diagonally dominant.
    
    Consider an arbitrary $x$ and the corresponding matrix $\mathrm{diag}(|p_x\>) - |p_x\>\<p_x|$. Pick any column $x'$. Its diagonal entry is
    \begin{align}
        p_{x',x} - p_{x',x}^2 &= p_{x',x} ( 1 - p_{x',x})
    \end{align}
    and its off-diagonal entries are
    \begin{align}
        p_{x',x} \cdot p_{x'',x} 
    \end{align}
    where $x''\neq x'$. We now see that the matrix is diagonally column dominant because the off-diagonal entries (which are all nonnegative) in each row sum up to the corresponding diagonal entry. 
\end{proof}

\begin{lemma}[Quantized grand coupling is completely positive]
The map $\cT$ corresponding to a grand coupling is completely positive.
\end{lemma}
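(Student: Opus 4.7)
The plan is to exhibit explicit Kraus operators for $\cC^*$ directly from the random mapping representation that underlies the grand coupling. As in the preceding lemma, the result for $\cT$ then follows because $\cT$ is completely positive iff its dual $\cT^*$ is, and the similarity transformation appearing in (\ref{eq:T_star}) preserves complete positivity by conjugating each Kraus operator $K$ of $\cC^*$ into $D^{-1/2} K D^{1/2}$.

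The first step is to substitute the grand coupling's defining formula
\begin{align}
    c_{(x',y'),(x,y)} = \sum_{r \in \cR} \Pr(r)\, [f(x,r)=x'] \, [f(y,r)=y']
\end{align}
into $\cC^*(\bullet) = \sum_{x,y,x',y'} c_{(x',y'),(x,y)} |x'\>\<x| \bullet |y\>\<y'|$ and swap the order of summation. For each $r \in \cR$, I introduce the operator
\begin{align}
    F_r = \sum_{x \in \Omega} |f(x,r)\>\<x|,
\end{align}
whose adjoint is $F_r^\dagger = \sum_{y \in \Omega} |y\>\<f(y,r)|$. The Iverson brackets then collapse the inner sums over $(x',y')$ and factor the result into
\begin{align}
    \cC^*(\bullet) = \sum_{r \in \cR} \Pr(r) \, F_r \, \bullet \, F_r^\dagger.
\end{align}
This is manifestly a Kraus decomposition with Kraus operators $\sqrt{\Pr(r)}\, F_r$, so $\cC^*$ is completely positive.

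Applying the similarity transformation in (\ref{eq:T_star}) then yields the Kraus operators $\sqrt{\Pr(r)}\, D^{-1/2} F_r D^{1/2}$ for $\cT^*$, which shows that $\cT^*$, and hence its dual $\cT$, is completely positive. I do not foresee a significant obstacle here: the lemma is essentially automatic once one recognizes that the grand coupling is built from a single shared source of randomness $R$, which produces exactly one Kraus operator $F_r$ per outcome $r$. This stands in sharp contrast to the independent coupling case, where the two independent copies of the randomness obstruct such a clean factorization and force the more delicate diagonal dominance argument used in the previous lemma.
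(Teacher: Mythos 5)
Your proof is correct and follows essentially the same route as the paper: both identify the one-Kraus-operator-per-randomness-outcome structure $F_r = \sum_x |f(x,r)\>\<x|$ (the paper obtains it by matricizing $C_\mathrm{grand}$ from (\ref{eq:C_grand}), you by substituting the Iverson-bracket form of $c_{(x',y'),(x,y)}$ directly) and then conjugate by $D^{\pm 1/2}$ to get the Kraus operators of $\cT^*$ and hence of $\cT$. Your operators $\sqrt{\Pr(r)}\,D^{-1/2}F_r D^{1/2}$ are exactly the adjoints of the paper's $T_r$ in (\ref{eq:T_r}), as expected for the dual map.
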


\begin{proof}
We apply the matricization map (the inverse of the vectorization map) to the transition matrix of the grand coupling in (\ref{eq:C_grand}) and obtain
\begin{align}
    \cC^*(\bullet) 
    &=
    \sum_{r\in \cR} \sum_{x,y} \Pr(r) \,
    |f(x,r)\>\<x| \bullet |y\>\<f(y,r)|.
\end{align}
After applying the similarity transformation, we obtain
\begin{align}
    & \nonumber
    \cT^*(\bullet) \\ 
    &=
    \sum_{r\in \cR} \sum_{x,y} \Pr(r) \,
    D^{-1/2}|f(x,r)\>\<x|D^{1/2} \bullet D^{1/2}|y\>\<f(y,r)|D^{-1/2}.    
\end{align}
Thus, the quantized grand coupling $\cT$ can be written as 
\begin{align}
    \cT(\bullet) 
    &= 
    \sum_{r\in\cR} T_r \bullet T_r^\dagger
\end{align}
with Kraus operators
\begin{align}\label{eq:T_r}
    T_r 
    &= 
    \sqrt{\Pr(r)} \sum_{x\in\Omega} D^{1/2} |x\>\<f(x,r)| D^{-1/2}.
\end{align}
The existence of the Kraus representation implies that $\cT$ is completely positive. 
\end{proof}

It is beneficial to explicitly verify that the quantized grand coupling has $|\sqrt{\pi}\>$ as a fixed point.  Recall that this already follows from Lemma~\ref{lem:fixed_point}. We have
\begin{align}
    & \nonumber
    \cT(|\sqrt{\pi}\>\<\sqrt{\pi}|) \\
    &=
    \sum_{r\in \cR} 
    \Pr(r) \sum_{x,y\in\Omega} D^{1/2} |x\>\<f(x,r)| D^{-1/2} |\sqrt{\pi}\>
    \<\sqrt{\pi}| D^{-1/2} |f(y,r)\>\<y| D^{1/2} \\
    &=
    \sum_{r\in \cR} 
    \Pr(r) \sum_{x,y\in\Omega} D^{1/2} |x\>\<f(x,r)|e\>
    \<e|f(y,r)\>\<y| D^{1/2} \\
    &=
    \sum_{x,y\in\Omega} D^{1/2} |x\>\<y| D^{1/2} \\
    &=
    |\sqrt{\pi}\>\<\sqrt{\pi}|,
\end{align}
where $|e\>=\sum_{x\in\Omega} |x\> = (1,1,\ldots,1)^T$ denotes the all-ones column vector.

\section{Convergence rate of quantized Markov chain couplings}\label{sec:convergence}

We establish here that the channel ${\cT}$ converges to the pure fixed point $\ket{\sqrt{\pi}}$ and provide a bound on the rate of convergence from an arbitrary initial state in the trace norm distance $\|\cdot\|_1 $. Conventional approaches to bounding mixing times \cite{temme2010chi,
temme2017thermalization,bardet2022hypercontractivity,bardet2021modified,
bardet2022hypercontractivity,
bardet2024entropy} for quantum Markov chains typically rely on the assumption that the fixed point is a mixed state of full rank. When the fixed points are pure states, we need to analyze the absorption probability into invariant subspaces \cite{carbone2021absorption,frigerio1978stationary}. We will show that the correct projector $Q$ onto the subspace of invariant states under $\cT$ and its complement $Q^\perp$ are
given by
\begin{align}
    Q 
    &= 
    \proj{\sqrt{\pi}} \\
    Q^\perp 
    &= 
    \mathbb{I} - Q,
\end{align}
respectively. To identify the absorbing subspace \cite{carbone2021absorption} and prove convergence to this subspace, cf.~Lemma 3.1 and Theorem 3.2 in \cite{frigerio1978stationary}, we need to prove the following properties. 

\begin{lemma}[Fixed point projector]
Let $\cT$ be a completely positive quantized Markov chain coupling of the Markov chain $P$ with fixed point $\pi$ and coupling $C$. Then, $Q = \proj{\sqrt{\pi}}$, is the maximal projector onto the absorbing subspace that satisfies the following properties:

\begin{enumerate}
    \item The projection $Q$ is said to {\bf reduce} $\cT$ if it is globally invariant under $\cT^m$, which means 
    \begin{align}
        Q\left [\cT^* \right ]^m(A)Q = Q\left [\cT^* \right ]^m(QAQ)Q \; \; \; \;  \mbox{for all} \; \; \; \; A \in \mathbb{M}_{|\Omega| \times |\Omega|}(\mathbb{C}).
    \end{align}

    \item The projection $Q$ is  {\bf expanding} under the dynamics of $\cT$, 
    which means
        \begin{align}
        \lim_{m\rightarrow \infty} \left [\cT^* \right ]^m(Q) = \mathbb{I}.
        \end{align}    
\end{enumerate}
\end{lemma}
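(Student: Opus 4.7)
My plan is to verify each of the two listed properties of $Q=|\sqrt{\pi}\>\<\sqrt{\pi}|$ separately. Property~(1) will follow from the rank-one structure of $Q$ together with the already-established fixed-point lemma, while property~(2) requires unfolding the similarity transform in Definition~\ref{def:cqc} and exploiting the coalescence of the classical coupling.

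For property~(1), the rank-one identity $QAQ=\<\sqrt{\pi}|A|\sqrt{\pi}\>\,Q$ reduces the claim to showing $\<\sqrt{\pi}|[\cT^*]^m(A)|\sqrt{\pi}\>=\<\sqrt{\pi}|A|\sqrt{\pi}\>$ for every matrix $A$. By Hilbert--Schmidt duality this quantity equals $\tr\bigl(\cT^m(Q)\,A\bigr)$, and Lemma~\ref{lem:fixed_point} gives $\cT^m(Q)=Q$, which immediately produces $\<\sqrt{\pi}|A|\sqrt{\pi}\>$. Applying the same identity with $A$ replaced by $QAQ$ yields the other side of the reducing identity, so both $Q[\cT^*]^m(A)Q$ and $Q[\cT^*]^m(QAQ)Q$ equal $\<\sqrt{\pi}|A|\sqrt{\pi}\>\,Q$.

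For property~(2), iterating the similarity transform gives $[\cT^*]^m(Q)=D^{-1/2}[\cC^*]^m(|\pi\>\<\pi|)\,D^{-1/2}$, where $|\pi\>=D^{1/2}|\sqrt{\pi}\>=\sum_x \pi_x|x\>$. Using coupling condition~(3) to identify the vectorized action of $\cC^*$ with the stochastic matrix $C$, the $(x',y')$-entry of $[\cT^*]^m(Q)$ becomes $(\pi_{x'}\pi_{y'})^{-1/2}\sum_{x,y}\pi_x\pi_y\,[C^m]_{(x',y'),(x,y)}$. For off-diagonal entries I bound this by $(\pi_{x'}\pi_{y'})^{-1/2}\max_{x,y}\Pr_{x,y}(\tau_\mathrm{coal}>m)$ using Remark~\ref{rem:helpful}, so coalescence drives it to zero. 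For the diagonal entries I apply the marginal identity~(\ref{eq:coupling_x}), rewriting $[C^m]_{(x',x'),(x,y)}=[P^m]_{x',x}-\sum_{y'\neq x'}[C^m]_{(x',y'),(x,y)}$; ergodicity of $P$ sends the first term to $\pi_{x'}$ and coalescence kills the second. Summing against $\pi_x\pi_y$ (finite state space, so term-by-term limits suffice) and dividing by $\pi_{x'}$ gives the limit $1$ on the diagonal, hence $[\cT^*]^m(Q)\to\mathbb{I}$.

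Maximality is then automatic: any absorbing projection strictly larger than the rank-one $Q$ would intersect the complement $Q^\perp$, yet property~(2) already shows that $[\cT^*]^m$ pushes $Q$ out to fill all of $\mathbb{I}$, leaving no room for a second invariant subspace. The main obstacle I expect is the diagonal analysis in property~(2): naive bounds such as $[C^m]_{(x',x'),(x,y)}\le 1$ discard the coupling input entirely, so one must split off the marginal contribution $[P^m]_{x',x}$ and combine ergodicity of $P$ with coalescence of $C$ to simultaneously control the prefactors $\pi_{x'}^{-1}$ and the remaining cross-coupling terms.
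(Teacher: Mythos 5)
Your proposal is correct and follows essentially the same route as the paper: property (1) via Hilbert--Schmidt duality together with the fixed-point lemma $\cT(\proj{\sqrt{\pi}})=\proj{\sqrt{\pi}}$, and property (2) by iterating the similarity transform to reduce to $[\cC^*]^m(|\pi\>\<\pi|)$ and invoking the marginal condition plus coalescence of $C$. Your entrywise analysis of the diagonal and off-diagonal terms merely makes quantitative the paper's qualitative ``unique distribution with marginal $\pi$ and no unequal components'' argument (and note the diagonal term $\sum_x\pi_x[P^m]_{x',x}=\pi_{x'}$ holds exactly by stationarity, so ergodicity is not even needed there).
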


\begin{proof}
For (1), note that equality holds if for all states $\rho$ we have 
\begin{align}
    \tr\left[\rho  Q\left [\cT^* \right ]^m(A)Q \right] = \tr\left[\rho Q\left [\cT^* \right ]^m(QAQ) Q\right]. 
\end{align}
We can work in the dual picture now and write using the definition of $Q$ that
\begin{align}
    \tr\left[\cT^m( Q \rho  Q) A \right] =  \bra{\sqrt{\pi}} \rho \ket{\sqrt{\pi}} \tr\left[\cT^m(\proj{\sqrt{\pi}}) A \right].
\end{align}
Recall that, cf.~Lemma~\ref{lem:fixed_point}, we have that $\cT(\proj{\sqrt{\pi}}) = \proj{\sqrt{\pi}}$, so that 
\begin{align}
    \tr\left[\cT^m( Q \rho  Q) A \right] =  \bra{\sqrt{\pi}} \rho \ket{\sqrt{\pi}} \bra{\sqrt{\pi}} A \ket{\sqrt{\pi}}.
\end{align}
The same set of arguments gives rise to 
\begin{align}
    \tr\left[\rho Q\left [\cT^* \right ]^m(QAQ) Q\right] =  \bra{\sqrt{\pi}} \rho \ket{\sqrt{\pi}} \bra{\sqrt{\pi}} A \ket{\sqrt{\pi}},
\end{align}
showing that both are equal. For (2), we have
\begin{align}
    (\cT^*)^m (Q)
    &= 
    D^{-1/2} (\cC^*)^m (D^{1/2} Q D^{1/2}) D^{-1/2} \\
    &=
    D^{-1/2} (\cC^*)^m (D^{1/2} |\sqrt{\pi}\>\<\sqrt{\pi}| D^{1/2}) D^{-1/2} \\
    &=
    D^{-1/2} (\cC^*)^m (|\pi\>\<\pi|) D^{-1/2}.
\end{align}
We now use that
\begin{align}
    \lim_{m\rightarrow\infty}(\cC^*)^m
    (|\pi\>\<\pi|) 
    &=
    \sum_{x\in\Omega} \pi_x |x\>\<x| = D.
\end{align}

To derive this limes, observe that $(\cC^*)^m(|\pi\>\<\pi|)$ is equal to
\begin{align}
    C^m(|\pi\> \otimes |\pi\>)
\end{align}
in the vectorized representation. The product probability distribution $|\pi\> \otimes |\pi\>$ has $|\pi\>$ as marginal in both components. The transition matrix $C$ preserves $|\pi\>$ as marginal due to the first coupling condition and $P|\pi\>=|\pi\>$. But $C$ also coalesces the components as $m$ increases meaning that the probability of having unequal states in both components goes to $0$. The unique probability distribution that has $|\pi\>$ as marginal and does not contain unequal terms coincides with $\sum_{x} \pi_x |x\> \otimes |x\>$, which is the vectorization of $D$.

Combining the above, we obtain 
\begin{align}
    \lim_{m\rightarrow\infty}(\cT^*)^m (Q) &= \mathbb{I}.
\end{align}
This is equivalent to 
\begin{align}\label{eq:gen_Q_annihilated}
    \lim_{m\rightarrow\infty}(\cT^*)^m (Q^\perp) &=
    0,
\end{align}
which implies that $Q$ is the fixed point of $\cT$. 
\end{proof}

Having established these properties of $Q$ and $\cT$, we can now rely on the results of \cite{frigerio1978stationary} that prove that $\cT^m$ converges to the invariant subspace with the projector $Q$. In particular, these results imply that $\ket{\sqrt{\pi}}$ is the unique fixed point of the quantized Markov chain coupling $\cT$. 
However, rather than referring to \cite{frigerio1978stationary}, we take a more direct approach and follow \cite{dengis2014optimal} to derive a quantitative bound on the convergence rate, which also immediately implies that $Q$ is the correct absorbing projector for $\cT$.

\begin{lemma}[Convergence bounds]\label{lem:convergence_bound}
Let $\cT$ and $Q$ be given as above. If we can find an integer $m$ for some $\varepsilon > 0$ so that $\tr \left [Q^\perp \cT^m(\rho_0)\right ] < \varepsilon$ for all density matrices $\rho_0$, then 
\begin{align}
  \left \| \cT^k(\rho_0) - \proj{\sqrt{\pi} } \right \|_1 \leq 2 \sqrt{\varepsilon},
\end{align}
for all $k \geq m$.
\end{lemma}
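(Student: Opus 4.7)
The plan is to reduce the claimed trace-norm bound to a fidelity-type estimate using the fact that $Q = |\sqrt{\pi}\>\<\sqrt{\pi}|$ is a rank-one projector, and then propagate the bound from time $m$ to all later times $k\ge m$ via contractivity of $\cT$ on Hermitian operators.

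First I would translate the hypothesis into a statement about the overlap of $\cT^m(\rho_0)$ with the fixed-point vector. Since $\cT$ is trace preserving by Lemma~\ref{lem:trace_preserving}, $\tr\cT^m(\rho_0) = 1$, so
\begin{align}
    \<\sqrt{\pi}|\cT^m(\rho_0)|\sqrt{\pi}\>
    = \tr\bigl[Q\,\cT^m(\rho_0)\bigr]
    = 1 - \tr\bigl[Q^\perp \cT^m(\rho_0)\bigr]
    > 1 - \varepsilon.
\end{align}
In particular the fidelity between the state $\sigma_m := \cT^m(\rho_0)$ and the pure state $|\sqrt{\pi}\>$ satisfies $F(\sigma_m,Q)^2 = \<\sqrt{\pi}|\sigma_m|\sqrt{\pi}\> > 1-\varepsilon$.

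Second, I would invoke the Fuchs--van de Graaf inequality $\tfrac12\|\rho-\sigma\|_1 \le \sqrt{1-F(\rho,\sigma)^2}$, which for a pure target $Q$ yields
\begin{align}
    \bigl\|\sigma_m - Q\bigr\|_1
    \;\le\; 2\sqrt{1-\<\sqrt{\pi}|\sigma_m|\sqrt{\pi}\>}
    \;\le\; 2\sqrt{\varepsilon}.
\end{align}
This settles the case $k=m$.

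Third, to extend to all $k\ge m$, I would write $k = m + n$ with $n\ge 0$ and use that $Q$ is a fixed point of $\cT$ (Lemma~\ref{lem:fixed_point}), so $\cT^n(Q)=Q$ and hence
\begin{align}
    \cT^k(\rho_0) - Q
    \;=\; \cT^n\bigl(\sigma_m\bigr) - \cT^n(Q)
    \;=\; \cT^n\bigl(\sigma_m - Q\bigr).
\end{align}
Since $\cT$ is completely positive and trace preserving (for grand/independent couplings, as established in the previous subsection), it is in particular a positive trace-preserving map, which is a contraction in the trace norm on Hermitian operators. The operator $\sigma_m - Q$ is Hermitian, so
\begin{align}
    \bigl\|\cT^k(\rho_0)-Q\bigr\|_1
    = \bigl\|\cT^n(\sigma_m - Q)\bigr\|_1
    \le \bigl\|\sigma_m - Q\bigr\|_1
    \le 2\sqrt{\varepsilon}.
\end{align}

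The main obstacle is really just making sure the two quantum-information facts invoked in steps two and three are applied cleanly: the fidelity-to-trace-distance conversion via Fuchs--van de Graaf, and the trace-norm contractivity of positive trace-preserving maps on the Hermitian subspace. Neither is deep, but they are the only non-bookkeeping ingredients; once they are in place, the argument is simply tracing through the identity $\cT^n(Q)=Q$ to reduce the long-time bound to the single-time overlap hypothesis.
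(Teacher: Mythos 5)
Your proof is correct and follows essentially the same route as the paper's: the paper converts the overlap bound $\tr[Q\rho_m]>1-\varepsilon$ into the $2\sqrt{\varepsilon}$ trace-norm bound via the gentle measurement lemma (which, since $Q$ is rank one, collapses $Q\rho_m Q/\tr[Q\rho_m]$ to $\proj{\sqrt{\pi}}$), and you do the same via Fuchs--van de Graaf for a pure target --- the two give identical bounds here. The extension to $k\ge m$ via $\cT^{k-m}(Q)=Q$ and trace-norm contractivity of the trace-preserving positive map is exactly the paper's appeal to the data processing inequality.
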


\begin{proof}
We set  $\rho_m=\cT^m(\rho)$ and write $\tr[Q^\perp \rho_m] = 1 - \tr[Q \rho_m] < \varepsilon$. Now, following the argument in \cite{dengis2014optimal} we apply the gentle measurement lemma \cite{wilde2013quantum} for the projective measurement $Q$ so that we have the trace norm bound 
\begin{align}
    \left \| \rho_m - \frac{Q\rho_m Q}{\tr[Q\rho_m]} \right \|_1 \leq 2\sqrt{\varepsilon}.
\end{align}
Given our definition of $Q$, we immediately have that $(Q\rho_mQ)(\tr[Q\rho_m])^{-1} = \proj{\sqrt{\pi}}$, which gives us the bound in the lemma for all $k \geq m$ by applying the data processing inequality \cite{wilde2013quantum} and observing that the projector reduces $\cT$. \end{proof}

Using these results, we therefore need to show that  $\tr \left [Q^\perp \cT^m(\rho_0)\right ] < \varepsilon$ for pairs of $m$ and $\varepsilon$. We will establish a connection between the quantized Markov coupling $\cT$ and the coalescence time of the classical Markov chain coupling $C$. We need to introduce some terminology. We define the edge states for $x,y \in \Omega$ as
\begin{align}
    |-_{xy}\> 
    &= 
    \frac{1}{\sqrt{2}} \big(
    |x\> - |y\> \big).
\end{align}
We refer to the matrices
\begin{align}
    |-_{xy}\>\<-_{xy}|
\end{align}
as elementary Laplacians and call any linear combination of them a graph Laplacian. Note that $Q^\perp$ is related to the $\pi$-weighted graph Laplacian of the complete graph on $\Omega$. 

\begin{proposition}\label{lem:rescaled_Qperp}
The rescaled projector $D^{1/2} Q^\perp D^{1/2}$ can be expressed as convex combination of elementary Laplacians. More precisely, we have
\begin{align}
    D^{1/2} Q^\perp D^{1/2} 
    &=
    \sum_{x,y} \pi_x \pi_y |-_{xy}\>\<-_{xy}|.
\end{align}
\end{proposition}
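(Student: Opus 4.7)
The plan is to reduce both sides to the same simple expression in terms of $D$ and $|\pi\>\<\pi|$. The proof is a direct algebraic calculation with no real obstacle; the only thing to be careful about is keeping track of the relationship $D^{1/2}|\sqrt{\pi}\> = |\pi\>$, which is what makes the identity look nontrivial despite being elementary.

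First I would rewrite the left-hand side. Since $Q = |\sqrt{\pi}\>\<\sqrt{\pi}|$ and $Q^\perp = I - Q$, we have
\begin{align}
    D^{1/2} Q^\perp D^{1/2}
    &= D - D^{1/2} |\sqrt{\pi}\>\<\sqrt{\pi}| D^{1/2}.
\end{align}
Using $D^{1/2}|\sqrt{\pi}\> = D^{1/2} \sum_x \sqrt{\pi_x} |x\> = \sum_x \pi_x |x\> = |\pi\>$, this simplifies to
\begin{align}
    D^{1/2} Q^\perp D^{1/2} &= D - |\pi\>\<\pi|.
\end{align}

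Next I would expand the right-hand side by writing out the elementary Laplacians as
\begin{align}
    |-_{xy}\>\<-_{xy}| &= \tfrac{1}{2}\bigl(|x\>\<x| - |x\>\<y| - |y\>\<x| + |y\>\<y|\bigr),
\end{align}
and substituting into $\sum_{x,y} \pi_x \pi_y |-_{xy}\>\<-_{xy}|$. Each of the four resulting double sums factorizes: using $\sum_y \pi_y = 1$ and $\sum_x \pi_x |x\>\<x| = D$ gives $D$ for the two diagonal terms, while $\sum_x \pi_x |x\> = |\pi\>$ gives $|\pi\>\<\pi|$ for the two cross terms. Collecting the four contributions yields
\begin{align}
    \sum_{x,y} \pi_x \pi_y |-_{xy}\>\<-_{xy}|
    &= \tfrac{1}{2}\bigl( D - |\pi\>\<\pi| - |\pi\>\<\pi| + D\bigr)
    = D - |\pi\>\<\pi|,
\end{align}
which matches the left-hand side and completes the proof.
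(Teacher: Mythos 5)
Your proof is correct and follows essentially the same route as the paper's: both reduce the left-hand side to $D - |\pi\>\<\pi|$ via $D^{1/2}|\sqrt{\pi}\> = |\pi\>$ and then match this against the symmetrized expansion of the elementary Laplacians (you expand the right-hand side down to $D - |\pi\>\<\pi|$, while the paper builds the Laplacian sum up from it, which is the same calculation read in the opposite direction). No gaps.
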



\begin{proof}
The lemma statement is established by the following elementary calculations:
\begin{align}
    D^{1/2} Q^\perp D^{1/2} 
    &=
    D^{1/2} \big(I - |\sqrt{\pi}\>\<\sqrt{\pi}| \big) D^{1/2} \\
    &=
    D - |\pi\>\<\pi| \\
    &=
    \sum_x \pi_x |x\>\<x| - \sum_{x,y} \pi_x \pi_y |x\>\<y| \\
    &=
    \sum_{x,y} \pi_x \pi_y |x\>\<x| - \sum_{x,y} \pi_x \pi_y |x\>\<y| \\
    &=
    \frac{1}{2}
    \sum_{x,y} \pi_x \pi_y \big(|x\>\<x| - |x\>\<y| - |y\>\<x| + |y\>\<y| \big) \\
    &=
    \sum_{x,y} \pi_x \pi_y |-_{xy}\>\<-_{xy}|.
\end{align}
\end{proof}

We introduced these decompositions into graph Laplacians because the quantum map $\cC^*$ preserves this structure as shown in the lemma below.

\begin{lemma}[Preservation of graph Laplacians]
The map $\cC^*$ preserves graph Laplacians. For $x\neq y$, we have
\begin{align}
    \cC^*(|-_{x,y}\>\<-_{x,y}|)
    &=
    \sum_{x',y'} c_{(x',y'),(x,y)} 
    |-_{x',y'}\>\<-_{x',y'}|.
\end{align}
\end{lemma}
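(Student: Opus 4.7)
The plan is to prove the identity by direct expansion of both sides in the computational basis, verifying equality term by term using the three coupling conditions. The key structural insight is that the elementary Laplacian $|-_{x,y}\>\<-_{x,y}|$ decomposes into four rank-one summands $\frac{1}{2}(|x\>\<x| - |x\>\<y| - |y\>\<x| + |y\>\<y|)$, so applying $\cC^*$ to it reduces to computing $\cC^*$ on each of these and using the fact that $\cC^*(|a\>\<b|) = \sum_{u',v'} c_{(u',v'),(a,b)} |u'\>\<v'|$ (which one reads off directly from the definition of $\cC^*$, since $\<u|a\>\<b|v\> = \delta_{u,a}\delta_{v,b}$).

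First I would expand the left-hand side using the above, which yields four sums indexed by the pairs $(x,x),(x,y),(y,x),(y,y)$. The diagonal pairs $(x,x)$ and $(y,y)$ collapse via coupling condition (2): $c_{(u',v'),(x,x)}$ vanishes unless $u'=v'$, in which case it equals $p_{u',x}$, and similarly for $(y,y)$. Thus the $|a\>\<a|$ contributions from $\cC^*$ produce the diagonal matrices $\sum_{u'} p_{u',x}|u'\>\<u'|$ and $\sum_{u'}p_{u',y}|u'\>\<u'|$. The two off-diagonal pairs remain as genuine sums $\sum_{u',v'} c_{(u',v'),(x,y)}|u'\>\<v'|$ and $\sum_{u',v'} c_{(u',v'),(y,x)}|u'\>\<v'|$.

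Next I would expand the right-hand side by substituting $|-_{x',y'}\>\<-_{x',y'}| = \frac{1}{2}(|x'\>\<x'| - |x'\>\<y'| - |y'\>\<x'| + |y'\>\<y'|)$ into the sum over $x',y'$; note the terms with $x'=y'$ drop out automatically since $|-_{x',x'}\>\<-_{x',x'}|=0$, so we may sum freely over all $x',y'$. The two ``diagonal'' pieces become $\sum_{x'}p_{x',x}|x'\>\<x'|$ and $\sum_{y'}p_{y',y}|y'\>\<y'|$ after summing out the spectator index via the marginal property in coupling condition (1), matching the LHS exactly. For the off-diagonal piece coming from $-|y'\>\<x'|$, I would relabel the summation indices $x'\leftrightarrow y'$, obtaining $\sum_{x',y'} c_{(y',x'),(x,y)}|x'\>\<y'|$, and then apply the symmetry coupling condition (3) to rewrite this as $\sum_{x',y'}c_{(x',y'),(y,x)}|x'\>\<y'|$, which is precisely the remaining piece on the LHS.

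The main obstacle, though minor, is bookkeeping: one must be careful that the $\cC^*$ convention places $|x'\>\<x|$ on the left and $|y\>\<y'|$ on the right, so the indices $(x,y)$ appearing in $c_{(x',y'),(x,y)}$ correspond to the input bra/ket pair in a specific order, and the step that invokes condition (3) is exactly what matches the $-|y'\>\<x'|$ term on the RHS with the $\cC^*(|y\>\<x|)$ term on the LHS. Once that identification is made, no further computation is needed and the identity follows.
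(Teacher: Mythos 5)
Your proposal is correct and follows essentially the same route as the paper: decompose the elementary Laplacian into its four rank-one pieces, evaluate $\cC^*$ on each using coupling conditions (2) and (1) for the diagonal terms $|x\>\<x|$, $|y\>\<y|$ and condition (3) for the cross term, and reassemble. The only cosmetic difference is that you expand the right-hand side and marginalize back, whereas the paper rewrites each left-hand piece into the doubly-indexed form; these are the same computation read in opposite directions.
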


\begin{proof}
We have
\begin{align}
    |-_{x,y}\>\<-_{x,y}|
    &=
    \frac{1}{2} \big( |x\>\<x| - |x\>\<y| - |y\>\<x| + |y\>\<y| \big).
\end{align}
Let us derive how $\cC^*$ acts on these terms. We will make use of the coupling conditions stated in Section~\ref{sec:Markov_chain_couplings}.

We have
\begin{align}
    \cC^*(|x\>\<x|)
    &=
    \sum_{x'} c_{(x',x'),(x,x)}
    |x'\>\<x'| \\
    &=
    \sum_{x'} p_{x',x} |x'\>\<x'| \\
    &=
    \sum_{x'} \sum_{y'} c_{(x',y'),(x,y)} |x'\>\<x'|,
\end{align}
which holds for any $y\neq x$. To obtain this identity, we used the first coupling condition and the second coupling condition.

Analogously, we have
\begin{align}
    \cC^*(|y\>\<y|)
    &=
    \sum_{x'} \sum_{y'} c_{(x',y'),(x,y)} |y'\>\<y'|,
\end{align}
which holds for any $x\neq y$.

For $x\neq y$, we have
\begin{align}
    \cC^*(|x\>\<y|)
    &=
    \sum_{x'} \sum_{y'} c_{(x',y'),(x,y)} |x'\>\<y'| \\
    \cC^*(|y\>\<x|)
    &=
    \sum_{x'} \sum_{y'} c_{(y',x'),(y,x)} |y'\>\<x'| \\
    &=
    \sum_{x'} \sum_{y'} c_{(x',y'),(x,y)} |y'\>\<x'|
\end{align}
We used the third coupling condition to establish the last inequality.

Putting everything together using linearity, we arrive at the lemma statement.
\end{proof}

\begin{lemma}\label{lem:tr_couple}
The probability $P_{x,y}\{\tau_\mathrm{coal} > m\}$ that the two components have not coalesced after making $m$ transitions according to the transition matrix $C$ when starting in the initial state $(x,y)$ can be written as
\begin{align}
    \Pr_{x,y}\{ \tau_\mathrm{coal} > m \}
    &=
    \tr\Big(
    \big[ \cC^* \big]^m \big( |-_{x,y}\>\<-_{x,y}| \big) 
    \Big). 
\end{align}
\end{lemma}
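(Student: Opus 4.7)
The plan is to prove the identity by iterating the graph-Laplacian preservation property (the preceding lemma) and then reading off the survival probability via Remark~\ref{rem:helpful}. The key observation is that elementary Laplacians $|-_{x',y'}\>\<-_{x',y'}|$ with $x'=y'$ vanish identically, so they match exactly the ``diagonal'' coalescence states that should be removed when computing $\Pr_{x,y}(\tau_{\mathrm{coal}}>m)$.

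First I would establish by induction on $m$ the expansion
\begin{align}
[\cC^*]^m\big(|-_{x,y}\>\<-_{x,y}|\big)
&= \sum_{x',y'\in\Omega} [C^m]_{(x',y'),(x,y)}\; |-_{x',y'}\>\<-_{x',y'}|.
\end{align}
The base case $m=1$ is precisely the previous lemma. For the inductive step, I apply $\cC^*$ once more and use the preservation lemma on each term, then reindex using $[C^{m+1}]_{(x'',y''),(x,y)}=\sum_{x',y'} c_{(x'',y''),(x',y')}[C^m]_{(x',y'),(x,y)}$. At this stage I would note the subtlety that the preservation lemma was stated for $x'\neq y'$; but when $x'=y'$, the second coupling condition forces $c_{(x'',y''),(x',x')}=0$ unless $x''=y''$, in which case the Laplacian $|-_{x'',y''}\>\<-_{x'',y''}|$ is zero anyway. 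Thus we may harmlessly sum over all $(x',y')$, and the induction closes.

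Next I take the trace of both sides. Since $\<-_{x',y'}|-_{x',y'}\>=1$ for $x'\neq y'$ and $|-_{x',x'}\>=0$, we obtain
\begin{align}
\tr\Big([\cC^*]^m\big(|-_{x,y}\>\<-_{x,y}|\big)\Big)
&= \sum_{x'\neq y'} [C^m]_{(x',y'),(x,y)}.
\end{align}
Finally, Remark~\ref{rem:helpful} identifies the right-hand side with $\Pr_{x,y}(\tau_{\mathrm{coal}}>m)$, completing the proof.

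The argument is essentially mechanical once the inductive expansion is in hand. The only potentially delicate point is the bookkeeping in the induction step: making sure that the ``forbidden'' diagonal terms $(x'=y')$ do not spoil the identification with $C^m$ when reassembling the sum. This is resolved cleanly by the observation above that either the coupling coefficient or the Laplacian itself vanishes, so no extra correction terms are needed.
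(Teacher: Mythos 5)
Your proof is correct and follows essentially the same route as the paper: iterate the graph-Laplacian preservation lemma to get $[\cC^*]^m(|-_{x,y}\>\<-_{x,y}|)=\sum_{x',y'}[C^m]_{(x',y'),(x,y)}|-_{x',y'}\>\<-_{x',y'}|$, take the trace using $\tr(|-_{x',y'}\>\<-_{x',y'}|)=1$ for $x'\neq y'$ and $|-_{x',x'}\>=0$, and invoke Remark~\ref{rem:helpful}. Your explicit handling of the diagonal terms in the induction step (either the coupling coefficient or the Laplacian vanishes) is a welcome detail that the paper leaves implicit.
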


\begin{proof}
We apply the previous lemma to evaluate $[\cC^*]^m(|-_{x,y}\>\<-_{x,y}|)$. After taking the trace, the resulting expression simplifies further because the trace of each edge Laplacian (with $x'\neq y'$) is equal to $1$. This yields
\begin{align}
    \tr\Big(
    \big[ \cC^*\big]^m \big( |-_{x,y}\>\<-_{x,y}| \big) 
    \Big) 
    &=
    \sum_{x',y'} [C^m]_{(x',y'),(x,y)} 
    \tr \big( |-_{x',y'}\>\<-_{x',y'}| \big) \\
    &=
    \sum_{x' \neq y'} [C^m]_{(x',y'),(x,y)} \\
    &=
    \Pr_{x,y}\{ \tau_\mathrm{coal} > m \},
\end{align}
where we used the simplification introduced in Remark~\ref{rem:helpful} for the final step.
\end{proof}


\begin{lemma}[Classical coalescence implies rapid quantum convergence]\label{lem:class_q_coal}
Consider an ergodic Markov chain with transition matrix $P=(p_{x',x})$ and stationary distribution $\pi = (\pi_x)$.
Assume that we are given a Markov chain coupling with transition matrix $C$ such that the corresponding quantized coupling $\cT$ is completely positive. 

Then, we have
\begin{align}
    \tr\Big( Q^\perp \cT^m(\rho_0)\Big) 
    &\le
    \frac{1}{\pi_*} \Pr_\mathrm{max} \{ \tau_\mathrm{coal} > m \},
\end{align}
where $\pi_* = \min \{\pi_x : x \in \Omega \}$.
\end{lemma}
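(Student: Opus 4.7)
My plan is to pass to the Heisenberg picture and reduce the claim to bounding the operator norm of $[\cT^*]^m(Q^\perp)$. By Hilbert--Schmidt duality, for any density matrix $\rho_0$, we have $\tr(Q^\perp\cT^m(\rho_0)) = \tr(\rho_0\,[\cT^*]^m(Q^\perp))$. Since $Q^\perp \succeq 0$ and $\cT^*$ is unital (Lemma~\ref{lem:trace_preserving}) and completely positive (as the dual of the CP map $\cT$), the operator $[\cT^*]^m(Q^\perp)$ is positive semidefinite. Therefore, for any state $\rho_0$,
\begin{align}
\tr(\rho_0\,[\cT^*]^m(Q^\perp)) \le \|[\cT^*]^m(Q^\perp)\|_\infty,
\end{align}
so it suffices to bound this operator norm by $\Pr_\mathrm{max}\{\tau_\mathrm{coal}>m\}/\pi_*$.

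Iterating the similarity transformation in~(\ref{eq:T_star}) gives
\begin{align}
[\cT^*]^m(Q^\perp) = D^{-1/2}\,[\cC^*]^m\!\big(D^{1/2}Q^\perp D^{1/2}\big)\,D^{-1/2}.
\end{align}
By submultiplicativity of $\|\cdot\|_\infty$ and $\|D^{-1/2}\|_\infty = 1/\sqrt{\pi_*}$,
\begin{align}
\|[\cT^*]^m(Q^\perp)\|_\infty \le \frac{1}{\pi_*}\,\big\|[\cC^*]^m(D^{1/2}Q^\perp D^{1/2})\big\|_\infty,
\end{align}
which is precisely how the prefactor $1/\pi_*$ enters the final bound.

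Next I would substitute the Laplacian decomposition from Proposition~\ref{lem:rescaled_Qperp}, namely $D^{1/2}Q^\perp D^{1/2} = \sum_{x,y}\pi_x\pi_y\,|-_{xy}\>\<-_{xy}|$, and iterate the graph-Laplacian-preservation lemma $m$ times to obtain
\begin{align}
[\cC^*]^m(D^{1/2}Q^\perp D^{1/2}) = \sum_{x,y,x',y'} \pi_x\pi_y\,[C^m]_{(x',y'),(x,y)}\,|-_{x',y'}\>\<-_{x',y'}|.
\end{align}
Applying the triangle inequality for $\|\cdot\|_\infty$, together with $\big\||-_{x',y'}\>\<-_{x',y'}|\big\|_\infty = 1$ for $x'\neq y'$ and $=0$ for $x'=y'$, the sum over $(x',y')$ collapses to the coalescence probability:
\begin{align}
\big\|[\cC^*]^m(D^{1/2}Q^\perp D^{1/2})\big\|_\infty \le \sum_{x,y}\pi_x\pi_y\sum_{x'\neq y'}[C^m]_{(x',y'),(x,y)} = \sum_{x,y}\pi_x\pi_y\,\Pr_{x,y}\{\tau_\mathrm{coal}>m\},
\end{align}
by Remark~\ref{rem:helpful}. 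Dominating each probability by $\Pr_\mathrm{max}\{\tau_\mathrm{coal}>m\}$ and using $\sum_{x,y}\pi_x\pi_y = 1$ closes the chain.

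The only step carrying real slack is the triangle inequality, since the elementary Laplacians $|-_{x',y'}\>\<-_{x',y'}|$ are not mutually orthogonal and a finer PSD-domination argument could conceivably sharpen the overall constant. For the present lemma, however, the resulting bound is clean, matches the stated form, and is already tight enough to plug into Lemma~\ref{lem:convergence_bound} and yield the convergence time advertised in the main theorem.
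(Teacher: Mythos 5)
Your proof is correct and follows essentially the same route as the paper's: Heisenberg picture, iterated similarity transformation to $\cC^*$, the Laplacian decomposition of $D^{1/2}Q^\perp D^{1/2}$, preservation of elementary Laplacians under $\cC^*$, and the identification with $\Pr_{x,y}\{\tau_\mathrm{coal}>m\}$. The only (immaterial) difference is that you extract the $1/\pi_*$ factor via submultiplicativity of $\|\cdot\|_\infty$ and a triangle inequality, whereas the paper applies $\tr(AB)\le\lambda_{\max}(B)\tr(A)$ with $B=D^{-1/2}\rho_0 D^{-1/2}$ and evaluates the trace of the evolved Laplacian sum exactly; both land on the identical quantity $\frac{1}{\pi_*}\sum_{x,y}\pi_x\pi_y\Pr_{x,y}\{\tau_\mathrm{coal}>m\}$.
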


\begin{proof}
We have
\begin{align}
    &
    \tr\Big( Q^\perp \cT^m(\rho_0)\Big) \\
    &=
    \tr\Big( \big(\cT^*\big)^m\big(Q^\perp\big) \rho_0\Big) \\
    &=
    \tr\Big( D^{-1/2}
    \big(\cC^*\big)^m\big(D^{1/2} Q^\perp D^{1/2} \big) D^{-1/2} \rho_0\Big) \\
    &\le
    \frac{1}{\pi_*}
    \tr\Big( 
    \big(\cC^*\big)^m\big(D^{1/2} Q^\perp D^{1/2} \big)
    \Big) \\
    &=
    \frac{1}{\pi_*}
    \sum_{x,y} \pi_x \pi_y \, \tr\left(
    \big( \cC^*\big)^m \big( |-_{x,y}\>\<-_{x,y}| \big) 
    \right) \\
    &=
    \frac{1}{\pi_*}
    \sum_{x,y} \pi_x \pi_y \, 
    \Pr_{x,y}\{\tau_\mathrm{coal} > m\} \\
    &\le
    \frac{1}{\pi_*} \Pr_\mathrm{max} \{\tau_\mathrm{coal} > m\}
\end{align}
First, we switch from the Schr\"odinger picture $\cT$ to the Heisenberg picture $\cT^*$. 
Second, we appeal to the fact that $\cT^*$ is obtained from $\cC^*$ by the similarity transformation $\cT^*(\bullet)=D^{-1/2} \cC^*(D^{1/2}\bullet D^{1/2})D^{-1/2}$, which facilitates the computation of the power $(\cT^*)^m$. 
Third, we rely on the bound $\tr(A B)\le \lambda_\mathrm{max}(B) \tr(A)$ that holds for arbitrary positive semidefinite matrices $A$ and $B$. We apply it with $A=(\cC^*)^m(|-_{x,y}\>\<-_{x,y}|)$ and $B=D^{-1/2}\rho_0 D^{-1/2}$ and use $\lambda_\mathrm{max}(B)\le \frac{1}{\pi_*}$.  
Fourth, we invoke Lemma~\ref{lem:rescaled_Qperp} to write $D^{1/2} Q^\perp D^{1/2}$ as a convex combination of elementary Laplacians.
Fifth, we invoke Lemma~\ref{lem:tr_couple} to write the traces of $[\cC^*]^m(|-_{x,y}\>\<-_{x,y}|)$ as $\Pr_{x,y}\{\tau_\mathrm{coal} > m\}$.
Finally, we use that the convex combination of $\Pr_{x,y}\{\tau_\mathrm{coal} > m\}$ is upper bounded by $\Pr_\mathrm{max}\{\tau_\mathrm{coal} > m\}$.
\end{proof}


\begin{lemma} \label{lem:exp_contraction}
Consider an arbitrary Markov chain with transition matrix $P$.  For any Markov chain coupling with transition matrix $C$, we have 
\begin{align}
    \Pr_\mathrm{max}\{\tau_\mathrm{coal} \ge \ell \cdot m\} 
    &\le
    \left( \Pr_\mathrm{max}\{\tau_\mathrm{coal} \ge m\} \right)^\ell.
\end{align}
\end{lemma}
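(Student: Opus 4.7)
My plan is to establish the bound by induction on $\ell$, using the Markov property of the coupled chain $C$ together with coupling condition (2), which says the two components remain glued once they meet. Write $p_m := \Pr_\mathrm{max}\{\tau_\mathrm{coal} \ge m\}$ for brevity. The base case $\ell=1$ is a tautology, so the whole argument reduces to showing that $p_{\ell m} \le p_m \cdot p_{(\ell-1)m}$ for every $\ell \ge 2$.

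Fix an arbitrary initial pair $(x,y)$ and condition on the state $(X_m,Y_m)$ of the coupled chain at time $m$. The key observation is that by coupling condition (2), once $X_k = Y_k$ the chains agree at all later times; consequently the event $\{\tau_\mathrm{coal} \ge \ell m\}$ can only occur on the event $\{X_m \ne Y_m\}$. Moreover, on this event, the Markov property says that the shifted process $(X_{m+k}, Y_{m+k})_{k \ge 0}$ started from $(X_m, Y_m) = (x',y')$ has the same law as the original coupling started from $(x',y')$, and its coalescence time $\tau'_\mathrm{coal}$ is related to the original one by $\tau_\mathrm{coal} = m + \tau'_\mathrm{coal}$. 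Thus $\{\tau_\mathrm{coal} \ge \ell m\}$ translates to $\{\tau'_\mathrm{coal} \ge (\ell-1)m\}$ on the surviving subset.

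Combining these observations,
\begin{align}
\Pr_{x,y}\{\tau_\mathrm{coal} \ge \ell m\}
&= \sum_{x' \ne y'} \Pr_{x,y}\{(X_m,Y_m)=(x',y')\}\, \Pr_{x',y'}\{\tau_\mathrm{coal} \ge (\ell-1)m\} \\
&\le p_{(\ell-1)m} \sum_{x' \ne y'}\Pr_{x,y}\{(X_m,Y_m)=(x',y')\} \\
&= p_{(\ell-1)m} \cdot \Pr_{x,y}\{X_m \ne Y_m\} \\
&\le p_{(\ell-1)m}\cdot p_m,
\end{align}
where the last step uses $\Pr_{x,y}\{X_m \ne Y_m\} \le \Pr_{x,y}\{\tau_\mathrm{coal} \ge m\} \le p_m$. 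Taking the maximum over $(x,y)$ on the left yields $p_{\ell m} \le p_m \cdot p_{(\ell-1)m}$, and iterating this bound gives $p_{\ell m} \le p_m^\ell$, which is exactly the lemma.

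There is no deep obstacle here; the argument is the standard submultiplicativity trick for coupling times. The only point that requires mild care is the bookkeeping in the second paragraph: one must use coupling condition (2) to rule out coalescence-then-separation when decomposing according to $(X_m,Y_m)$, and then invoke the Markov property of the coupled chain (not of the marginals) to identify the residual dynamics with a fresh copy of the coupling started at $(x',y')$. Once this is done, the rest is just pulling $p_{(\ell-1)m}$ out of the sum and recognizing the leftover as $\Pr_{x,y}\{X_m \ne Y_m\}$.
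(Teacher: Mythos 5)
Your proof is correct, and it is essentially the paper's argument expressed in probabilistic rather than linear-algebraic language. The paper writes $C$ in block upper-triangular form with respect to the diagonal/off-diagonal decomposition of $\Omega\times\Omega$, identifies $\Pr_\mathrm{max}\{\tau_\mathrm{coal}>m\}$ with the induced $1$-norm of the off-diagonal block $B^m$, and invokes submultiplicativity of that norm; your conditioning on $(X_m,Y_m)$ restricted to $x'\neq y'$ is exactly the entrywise content of the inequality $\|B^{\ell m}\|_1\le\|B^m\|_1\,\|B^{(\ell-1)m}\|_1$, so the two proofs coincide step for step. Your version does spell out one point the paper leaves implicit, namely that identifying the post-time-$m$ dynamics with a fresh copy of the coupling relies on $C$ being a single transition matrix independent of the initial pair, an assumption the paper states separately in Section~2.
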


\begin{proof}
The coupling transition matrix $C$ acts on state space $\Omega\times\Omega$, which we decompose into the disjoint union of $\Omega_\mathrm{diag} = \{ (x,x) : x \in \Omega \}$ and its complement denoted by $\bar{\Omega}_\mathrm{diag}$.

The matrix $C$ has the following block structure
\begin{align}
    \left(
    \begin{array}{cc}
    P & A \\
    0 & B
    \end{array}
    \right)
\end{align}
with respect to this decomposition. Observe that $P\in\R^{N\times N}$ is the transition matrix of the Markov chain. The matrices $A\in\R^{N\times \bar{N}}$ and $B\in\R^{\bar{N}\times\bar{N}}$ are some matrices. The reason for this particular block structure is that once the components meet, they stay together forever. In addition, $C$ restricted to $\Omega_\mathrm{diag}$ acts as $P$. 

The power $C^m$ has the block structure
\begin{align}
    \left(
    \begin{array}{cc}
    P^m & A^{(m)} \\
    0   & B^m
    \end{array}
    \right).
\end{align}
Observe that 
\begin{align}
    \| B^m \|_1 
    &=
    \max_{x\neq y} \sum_{x'\neq y'} [C^m]_{(x',y'),(x,y)} \\
    &=
    \max_{x\neq y} \Pr_{x,y}\{\tau_\mathrm{coal} > m\} \\
    &= 
    \Pr_\mathrm{max}\{\tau_\mathrm{coal} > m\}.
\end{align}
We assume that $m$ is chosen to be large enough so that 
\begin{align}
    \Pr_\mathrm{max}\{\tau_\mathrm{coal} > m\} \le \frac{1}{2}.
\end{align}  
The power $C^{\ell m}$ has the block structure
\begin{align}
    C^{\ell m}
    &=
    \left(
    \begin{array}{cc}
    P^{\ell m} & A^{(\ell m)} \\
    0          & B^{\ell m}
    \end{array}
    \right).
\end{align}
We have
\begin{align}
    \Pr_\mathrm{max}\{\tau_\mathrm{coal} > \ell m\} 
    &= 
    \|B^{\ell m} \|_1 \\
    &\le 
    \|B^m \|_1^\ell \\
    &=
    \left( \Pr_\mathrm{max}\{\tau_\mathrm{coal} > m\} \right)^\ell
\end{align}
since the matrix norm $\| \bullet \|_1$ is sub-multiplicative.
\end{proof}

\medskip

We can now collect the previously established results and prove the main convergence result.

\begin{theorem}[main]
Let $\cT$ denote a quantized coupling of the Markov chain $P$ with stationary distribution $\pi=(\pi_x : x \in \Omega)$. Given any initial state $\rho$, this map converges to the qsample $|\sqrt{\pi}\>\<\sqrt{\pi}|$ in trace norm distance as  

\begin{align}
   \frac{1}{2}   \big\| \cT^{m}(\rho) - |\sqrt{\pi}\>\<\sqrt{\pi}| \big\|_\mathrm{tr}
    &\le \sqrt{\varepsilon},
\end{align}
after
\begin{align}
    m \geq \frac{1}{2} \log_2 \left (\frac{1}{\varepsilon \; \pi_*}\right) \; t_\mathrm{couple}
\end{align}
steps, where $\pi_* = \min\{ \pi_x : x \in \Omega\}$. 
\end{theorem}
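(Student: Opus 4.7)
The plan is to chain together the three lemmas already established in this section: Lemma~\ref{lem:class_q_coal} bounds $\tr[Q^\perp \cT^m(\rho)]$ by the classical coalescence probability $\Pr_\mathrm{max}\{\tau_\mathrm{coal} > m\}$ up to a factor of $1/\pi_*$; Lemma~\ref{lem:exp_contraction} shows that this coalescence probability decays exponentially in multiples of any fixed base step; and Lemma~\ref{lem:convergence_bound} converts a bound on $\tr[Q^\perp \cT^m(\rho)]$ into a trace-norm bound via the gentle-measurement argument. The natural base step to feed into Lemma~\ref{lem:exp_contraction} is $t_\mathrm{couple}$, since by its very definition the coalescence probability at that step is at most $1/4$, which comfortably satisfies the $\le 1/2$ hypothesis of that lemma.

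Concretely, I would write $m = \ell \cdot t_\mathrm{couple}$ for an integer $\ell$ to be chosen, and apply Lemma~\ref{lem:exp_contraction} with base step $t_\mathrm{couple}$ to obtain
\begin{align}
\Pr_\mathrm{max}\{\tau_\mathrm{coal} > \ell \cdot t_\mathrm{couple}\} \le (1/4)^\ell = 2^{-2\ell}.
\end{align}
Combining this with Lemma~\ref{lem:class_q_coal} yields $\tr[Q^\perp \cT^m(\rho)] \le (1/\pi_*)\, 2^{-2\ell}$. Requiring this upper bound to be at most $\varepsilon$ and solving for $\ell$ gives the threshold $\ell \ge \tfrac{1}{2} \log_2\big(1/(\varepsilon \pi_*)\big)$, which, after multiplying through by $t_\mathrm{couple}$, is exactly the lower bound on $m$ stated in the theorem.

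Finally, feeding the resulting inequality $\tr[Q^\perp \cT^m(\rho)] \le \varepsilon$ into Lemma~\ref{lem:convergence_bound} immediately gives $\|\cT^m(\rho) - |\sqrt{\pi}\>\<\sqrt{\pi}|\|_\mathrm{tr} \le 2\sqrt{\varepsilon}$, and dividing by $2$ reproduces the factor-of-$1/2$ form of the bound in the theorem statement. No step here should present a substantive obstacle, since every ingredient is already in place; the only bookkeeping is to verify the $\le 1/2$ hypothesis of Lemma~\ref{lem:exp_contraction} (automatic from the definition of $t_\mathrm{couple}$) and to treat $\ell$ as an integer, a rounding that is absorbed into the ``$\ge$'' in the theorem's step-count bound.
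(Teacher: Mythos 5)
Your proposal is correct and follows essentially the same route as the paper's own proof: set $m=\ell\, t_\mathrm{couple}$, combine Lemma~\ref{lem:class_q_coal} with Lemma~\ref{lem:exp_contraction} to get $\tr[Q^\perp\cT^m(\rho)]\le \pi_*^{-1}4^{-\ell}\le\varepsilon$ for $\ell\ge\tfrac{1}{2}\log_2(1/(\varepsilon\pi_*))$, then apply Lemma~\ref{lem:convergence_bound}. Your explicit check of the $\le 1/2$ hypothesis of Lemma~\ref{lem:exp_contraction} is a small point of care the paper leaves implicit.
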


\begin{proof}
The result follows from a combination of the results above, when using the reducing and expanding projector $Q = \proj{\sqrt{\pi}}$ for which we know that due to Lemma~\ref{lem:class_q_coal} that 
\begin{align}
    \tr\Big( Q^\perp \cT^m(\rho_0)\Big) 
    &\le
    \frac{1}{\pi_*} \Pr_\mathrm{max} \{ \tau_\mathrm{coal} > m \}.
\end{align}
Setting m = $\ell \, t_\mathrm{couple}$, we can bound $\Pr_\mathrm{max} \{ \tau_\mathrm{coal} > m \} \leq 4^{-\ell}$ due to Lemma \ref{lem:exp_contraction} and the definition of $t_\mathrm{couple}$. Setting $\ell \geq 2^{-1} \log_2 \left (\frac{1}{\varepsilon \; \pi_*}\right)$, we have that $\tr( Q^\perp \cT^m(\rho_0)) \leq \varepsilon$ so that we can apply Lemma ~\ref{lem:convergence_bound} and read off the bound as stated in the theorem after $m$ steps. 
\end{proof}

\section{Implementation of quantized grand couplings}\label{sec:implementation}

We now show how to efficiently implement quantized grand couplings. The key insight is that the Kraus operators $T_r$ in (\ref{eq:T_r}) are sparse for many interesting Markov chains studied in the literature.

\subsection{Simulation of sparse tcp-maps}

For the sake of completeness, we briefly describe how to simulate tcp-maps given access to block-encodings of their Kraus operators. To keep the presentation simple, we assume that the block-encodings perfectly encode the Kraus operators. We refer the reader to \cite[Lemma 2]{li2023simulating} for the most general solution that examines how errors in the block-encodings affect the accuracy with which the tcp-map can be approximated.

\begin{lemma}[Simulation of tcp-maps]
Let
\begin{align}
    \cN(\bullet) 
    &= 
    \sum_{k=1}^\kappa A_k \bullet A_k^\dagger
\end{align}
be an arbitrary tcp-map acting on $\C^d$ with Kraus operators $A_k$. Assume we are given perfect block-encodings of the Kraus operators $A_k$ by the unitaries 
\begin{align}
    U_k 
    &=
    \left(
    \begin{array}{cc}
         A_k & C_k \\
         B_k & D_k
    \end{array}
    \right),
\end{align}
acting on $\C^2\otimes \C^d$ (where $B_k$, $C_k$, and $D_k$ can be any matrices acting on $\C^d$ as long as $U_k$ is unitary). Then, there exists an efficient quantum circuit that simulates the tcp-map $\cN$.
\end{lemma}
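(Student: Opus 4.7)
The plan is to implement the Stinespring dilation of $\cN$ directly from the given block-encodings and then trace out the dilating registers. The target isometry is
\begin{align}
V \;=\; \sum_{k=1}^\kappa A_k \otimes |k\rangle_E,
\end{align}
which sends $\C^d$ into $\C^d \otimes \C^\kappa$. The trace-preservation condition $\sum_k A_k^\dagger A_k = I$ is exactly $V^\dagger V = I$, so $V$ is a genuine isometry and $\cN(\rho) = \tr_E[V \rho V^\dagger]$. It therefore suffices to implement $V$ as a unitary acting on the system, the block-encoding flag qubit $a$, and an environment register $E$, and then discard $a$ and $E$.

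To realize $V$, I would use the standard PREP-SELECT construction. Introduce $E = \C^\kappa$ (that is, $\lceil \log_2 \kappa \rceil$ qubits) in addition to the flag qubit $a$ shared by all $U_k$. Apply PREP on $E$ sending $|0\rangle_E \mapsto \kappa^{-1/2} \sum_k |k\rangle_E$, followed by $\mathrm{SEL} = \sum_k |k\rangle\langle k|_E \otimes U_k$, the coherently controlled version of the given block-encodings. On the initial state $|\psi\rangle \otimes |0\rangle_a \otimes |0\rangle_E$, expanding each $U_k$ in its block form gives
\begin{align}
\kappa^{-1/2} \sum_{k=1}^\kappa |k\rangle_E \otimes \left( |0\rangle_a A_k |\psi\rangle + |1\rangle_a B_k |\psi\rangle \right).
\end{align}
Conditioning on the flag being in $|0\rangle_a$ leaves the $(E,S)$ system in the unnormalized state $\kappa^{-1/2} V|\psi\rangle$, and tracing out $E$ produces $\kappa^{-1}\, \cN(|\psi\rangle\langle\psi|)$. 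By linearity this identity extends from pure $|\psi\rangle$ to arbitrary mixed $\rho$, so the circuit implements $\cN$ with flag-success probability $1/\kappa$.

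To remove the $1/\kappa$ loss, I would invoke oblivious amplitude amplification. Because $V$ is a genuine isometry, the PREP-SELECT circuit maps the input subspace into a two-dimensional subspace in which the ``good'' amplitude is exactly $\kappa^{-1/2}$, independently of $|\psi\rangle$; this is precisely the setting in which $O(\sqrt{\kappa})$ rounds of amplitude amplification deterministically rotate all amplitude onto the flag-$|0\rangle_a$ branch, after which discarding $a$ and $E$ yields $\cN(\rho)$ exactly. The construction itself is short; what I expect to be the delicate point, and what is treated carefully in \cite[Lemma 2]{li2023simulating}, is the robustness analysis: once the block-encodings are only approximate, the good-subspace amplitude ceases to be exactly $\kappa^{-1/2}$, so one has to track how errors in each $U_k$ propagate through both SELECT and the amplification. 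That general analysis is the reason the cited lemma is invoked for the complete statement; the idealized argument above establishes feasibility under the simplifying assumption made in the lemma.
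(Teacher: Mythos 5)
Your proposal is correct and follows essentially the same route as the paper: a uniform PREP over the Kraus index, a SELECT of the given block-encodings, the observation that the flag-$|0\rangle$ amplitude is exactly $\kappa^{-1/2}$ independently of the input (which the paper verifies by computing $\sum_k B_k^\dagger B_k = (\kappa-1)I$ and you get for free from $V^\dagger V = I$), oblivious amplitude amplification with $O(\sqrt{\kappa})$ rounds, and tracing out the ancillas. The Stinespring-isometry framing is a slightly cleaner packaging of the same circuit, not a different argument.
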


\begin{proof}
Adjoin a new control register $\C^m$. Define the controlled unitary
\begin{align}
    W &= \sum_{j=1} |j\>\<j| \otimes U_j
\end{align}
acting on $\C^\kappa \otimes \C^2 \otimes \C^d$ and the state
\begin{align}
    |\mu\> 
    &=
    \frac{1}{\sqrt{\kappa}} \sum_{k=1}^\kappa |k\>.
\end{align}
Let $|\xi\>\in\C^d$ be an arbitrary initial state. Applying $W$ to the state
\begin{align}
    |\Psi\>
    &=
    |\mu\> \otimes |0\> \otimes |\xi\>.
\end{align}
results in the state
\begin{align}
    |\Phi\>
    &=
    W |\Psi\> \\
    &=
    \frac{1}{\sqrt{\kappa}} \sum_{k=1}^\kappa |k\> \otimes U_k \big( |0\> \otimes |\xi\> \big) \\
    &=
    \frac{1}{\sqrt{\kappa}} \sum_{k=1}^\kappa |k\> \otimes |0\> \otimes A_k |\xi\> +
    \frac{1}{\sqrt{\kappa}} \sum_{k=1}^\kappa |k\> \otimes |1\> \otimes B_k |\xi\> 
\end{align}
It will be helpful to exchange the positions of the $\C^\kappa$ register and the $\C^2$ register.  After this rearrangement, we can express everything as
\begin{align}
    |\Psi\> 
    &= 
    |0\> \otimes |\psi\> \\
    |\Phi\> 
    &= 
    |0\> \otimes \frac{1}{\sqrt{\kappa}} \, |\phi_0\> +
    |1\> \otimes \sqrt{1-\frac{1}{\kappa}} \, |\phi_1\>, \label{eq:Phi}
\end{align}
where
\begin{align}
    |\psi\> 
    &= 
    |\mu\> \otimes |\xi\> \\
    |\phi_0\> 
    &= 
    \sum_{k=1}^\kappa |k\> \otimes A_k |\psi\> \\
    |\phi_1\> 
    &= 
    \frac{1}{\sqrt{\kappa-1}} 
    \sum_{k=1}^\kappa |k\> \otimes B_k |\psi\>.
\end{align}
Observe that $|\phi_0\>$ and $|\phi_1\>$ have the norm $1$.  For $|\phi_0\>$, we have
\begin{align}
    \<\phi_0|\phi_0\> 
    &= 
    \sum_{k=1}^\kappa \<\psi|A_k^\dagger A_k |\psi\> = 1
\end{align}
since the Kraus operators satisfy the condition $\sum_k A_k^\dagger A_k = I_d$.  For $\phi_1$, we have
\begin{align}
    \<\phi_1|\phi_1\> 
    &= 
    \frac{1}{\kappa-1}\sum_{k=1}^\kappa \<\psi|B_k^\dagger B_k|\psi\> = 1
\end{align}
since $\sum_{k=1}^\kappa B_k^\dagger B_k = (\kappa-1) I_d$. The latter is seen as follows. For each $\kappa$, the top-left block of $U_k^\dagger U_k$ is given by the formula $A_k^\dagger A_k + B_k^\dagger B_k$, which must be equal to $I_d$ since $U_k$ is unitary. Summing over $k=1,\ldots,\kappa$ yields $\kappa \, I_d = \sum_k A_k^\dagger A_k + B_k^\dagger B_k = I_d + \sum_k B_k^\dagger B_k$, where we used the Kraus condition.

Let us emphasize that the amplitudes $1/\sqrt{\kappa}$ and $\sqrt{1 - 1/\kappa}$ of $|\phi_0\>$ and $|\phi_1\>$, respectively, in the decomposition of $|\Phi\>$ in  (\ref{eq:Phi}) are constants (that is, they are completely independent of $|\xi\>$ even though $|\Phi\>$ depends on $|\xi\>$). This makes it possible to apply oblivious amplitude amplification to map the initial state $|\Psi\>$ to $|\Phi\>$.  The adjective ``oblivious'' in this context means that reflecting around the initial state $|\Psi\>$ is not necessary. Define the reflection $R=2 P - I$, where $P = |0\>\<0| \otimes I_\kappa \otimes I_d$, and the Grover iteration $G=-W R W^\dagger R$.  We refer the reader to \cite[Lemma 3.6]{berry14exponential}, which shows that $O(\sqrt{\kappa})$ Grover iterations suffice to map $|\Psi\>$ to $|\Phi\>$.

To summarize, the discussion above shows that we can prepare the state
\begin{align}
    |0\> \otimes \sum_{k=1}^\kappa |k\> \otimes A_k |\xi\>
\end{align}
on $\C^2 \otimes \C^\kappa \otimes \C^d$ starting from an arbitrary state $|\xi\>$ in the third register. 

It now suffices to trace out the $\C^2$ and $\C^\kappa$ registers to implement the tcp-map $\cN$. Discarding the two register prepares the state
\begin{align}
    \sum_{k=1}^\kappa A_k |\xi\>\<\xi| A_k^\dagger.
\end{align}
It is important that this works for any initial state $|\xi\>$ on $\C^d$.  
\end{proof}

In the remainder of this section, we present three examples of grand couplings from the book \cite{levin2017mixing}. The resulting Kraus operators $T_r$ are all row- and column-sparse.  We can now leverage the results in \cite{camps2024explicit, gilyen2019quantum} to efficiently block-encode such matrices.

\subsection{Random walk on the hypercube}

This is based on \cite[Example 5.3.1]{levin2017mixing}.
Consider a lazy walk on the hypercube $\{0,1\}^n$: at each step the walker remains at the current position with probability $1/2$ and with probability $1/2$ moves to a position chosen uniformly at random among the neighboring vertices (bit-strings at Hamming distance $1$).

A convenient way to generate the lazy walk is as follows: pick one of the $n$ coordinates uniformly at random, and refresh the bit at this coordinate with a random bit (one which equals $0$ or $1$ with probability $1/2$).

This method for running the walk leads to the following coupling of two walks: first, 
pick among the $n$ coordinates uniformly at random; suppose that the coordinate $i$ is selected. In both walks, replace the bit at coordinate $i$ with the same random fair bit. 

A moment's thought reveals that this is a coupling that satisfies all the conditions required to define a valid tcp-map.
For $i\in\{1,\ldots,n\}$ and $b\in\{0,1\}$ define the Kraus operator $T_{i,b}$ acting on the computational basis of $(\C^2)^{\otimes n}$ as follows
\begin{align}
T_{i,b}|x_1, \ldots, x_i, \ldots x_n\>
&=
|x_1, \ldots, b, \ldots x_n\>.
\end{align}
We now have
\begin{align}
    \cT(\bullet)
    &=
    \frac{1}{2n}
    \sum_{i=1}^n \sum_{b\in\{0,1\}}
    T_{i, b} \bullet T_{i, b}^\dagger.
\end{align}
We do not have to apply the similarity transformation because the limiting distribution is the uniform distribution. We can efficiently implement the tcp-map because the Kraus operators are $n$-sparse. 

If $\tau$ is the first time when all of the coordinates have been selected at least once, then the two walkers agree with each other from time $\tau$ onwards. The distribution of $\tau$ is exactly the same as the coupon's random variable. One can show that
\begin{align}
    \Pr\{\tau > n\log n + cn\} \le e^{-c}.
\end{align}

\subsection{Random colorings}

In this and the next subsection, we consider two chains whose state spaces are contained in a set of a form $S^V$, where $V$ is the vertex set of a graph and $S$ is a finite set. The elements of $S^V$called configurations, are the functions from $V$ to $S$. We visualize a configuration as a labeling of vertices with elements of $S$.

A proper $q$-coloring of a graph $G=(V,E)$ with vertex set $V$ and edge set $E$ is an element $x$ of $\{1,2,\ldots,q\}^V$, the set of functions from $V$ to $\{1,2,\ldots,q\}$, such that $x(v)\neq x(w)$ for all edges $\{v,w\}\in E$. 

For a given configuration $x$ and a vertex $v$, call a color $k$ allowable at $v$ if $k$ is different from all colors assigned to neighbors of $v$. That is, a color is allowable at $v$ if it does not belong to the set $\{x(w) : w \sim v\}$.

Let $\cX$ denote the collection of all proper $q$-colorings of $G$ and $\pi$ denote the uniform distribution on $\cX$.
In \cite[Example 3.5]{levin2017mixing} a \textit{Metropolis} chain for $\pi$ is constructed. A transition for this chain is made by first selecting a vertex $v$ uniformly from $V$ and then selecting a color $k$ uniformly from $\{1,\ldots,q\}$. If color $k$ is allowable at $v$, then vertex $v$ is assigned color $k$. Otherwise, no transition is made.

Analogously to a single Metropolis chain on colorings $\cX$, the grand coupling at each move generates a single vertex and color pair $(v,k)$, uniformly at random from $V\times\{1,\ldots,q\}$ and independently of the choices made in previous time steps. For each $x\in\cX$, the coloring is updated by attempting to re-color vertex $v$ with color $k$, accepting the update if and only if the proposed new color $k$ is allowable at $v$. (If a re-coloring is not accepted, then chain remain in its current state).

For two different colorings $x,y\in\cX$, let $X_m^x$ and $X_m^y$ denote the state of the grand coupling after $m$ steps when starting the two components in $x$ and $y$. One can show that 
\begin{align}
    \Pr\{X_m^x\neq X_m^y\} 
    &\le
    n \;
    \exp\big(
    -m \;
    c_\mathrm{met}(\Delta, q) / n
    \big),
\end{align}
where $\Delta$ denotes the maximum degree in the graph and $c_\mathrm{met} = 1 - (3 \Delta) / q$.
That is, as long as $q>3 \Delta$, the probability that the grand coupling has not coalesced decays exponentially in the number of time steps $m$. 

We can now write the quantized coupling acting on $(\C^q)^{\otimes n}$ by 
\begin{align}
    \cT(\bullet) 
    &= 
    \frac{1}{qn} \sum_{v\in V} \sum_{k=1}^q T_{(v,k)} \bullet T_{(v,k)}^\dagger.
\end{align}
The (unnormalized) Kraus $T_{(v,k)}$ are defined below. 
For a coloring $x\in\{1,\ldots,q\}^V$ (that does not need to be proper), define the operator $T_{(v,k)}$ by its action on the computational basis states according to the three cases:
\begin{itemize}
\item $T_{(v,k)} |x\> = |x\>$ if $x$ is not a proper colorings
\item $T_{(v,k)} |x\> = |x\>$ if $x$ is a proper coloring, but color $k$ is not allowable at $v$
\item $T_{(v,k)} |x\> = |x'\>$ if $x$ is a proper coloring and re-coloring vertex $v$ with color $k$ yields the proper coloring $x'$
\end{itemize}
We do not have to apply a similarity transformation because the limiting distribution $\pi$ is uniform. 
The resulting tcp-map can be implemented efficiently because the Kraus operators are $(qn)$-sparse.

\subsection{Hardcore model with fugacity}

A hardcore configuration on the graph $G=(V,E)$ is a function 
$x\in \{0,1\}^V$ satisfying $x(v)x(w)=0$ whenever $\{v,w\}\in E$; that is, 
no two adjacent vertices are simultaneously occupied. A vertex $v$ with 
$x(v)=1$ is called occupied, and a vertex $v$ with $x(v)=0$ is called vacant. Let $\cX$
denote the set of all hardcore configurations on $G$.

The hardcore model with fugacity $\lambda$ is the probability distribution $\pi$ on hardcore configurations $x\in\cX$ defined by
\begin{align}
    \pi(x) &=
    \frac{\lambda^{\sum_{v\in V} x(v)}}{Z(\lambda)}.
\end{align}
The partition function $Z(\lambda)=\sum_{x\in\cX} \lambda^{\sum_{v\in V} x(v)}$
normalizes $\pi$ to have unit total mass.

In \cite[3.3.4]{levin2017mixing} a Glauber chain for $\pi$ is constructed. It updates a configuration $X_m$ to a new configuration $X_{m+1}$ as follows: 

If there is a neighbor $w'$ of $w$ such that $X_m(w')=1$, then set $X_{m+1}(w)=0$; otherwise, let
\begin{align}
    X_{m+1}(w) = \left\{
    \begin{array}{ll}
    1 & \mbox{with probability $\lambda/(1+\lambda)$,} \\
    0 & \mbox{with probability $1/(1+\lambda)$.} 
    \end{array}
    \right.
\end{align}
Furthermore, set $X_{m+1}(v)=X_m(v)$ for all $v\neq w$.

We use the grand coupling which is run as follows: a vertex $v$ is selected uniformly at random, and a coin with probability $\lambda/(1+\lambda)$ of heads is tossed, independently of the choice of $v$. Each hardcore configuration $x$ is updated using $v$ and the result of the coin toss. If the coin lands tails, any particle present at $v$ is removed. If the coin lands heads and all neighbors of $v$ are unoccupied in the configuration $x$, then a particle is placed at $v$.

The arguments in the proof of \cite[Theorem 5.9]{levin2017mixing} show that
\begin{align}
    \Pr\{X_m^x \neq X_m^y\} 
    &\le
    n \exp\big(- m \; c_H(\lambda) / n \big),
\end{align}
where $c_H(\lambda)=[1 + \lambda(1-\Delta)]/(1+\lambda)$.

Similarly to the previous example, we see that the resulting tcp-map can be realized with $n$-sparse Kraus operators. However, in the present example we need to perform the similarity transformation because the stationary distribution $\pi$ is not uniform. This can be done efficiently because it is easy to compute the ratios $\pi(x)/\pi(y)$ for two hardcore configurations. The hard-to-compute partition function $Z(\lambda)$ cancels out, and we are left with 
\begin{align}
    \lambda^{\sum_{v\in V} x(v) - y(v)}.
\end{align}

\subsection{General theorem}

It is now straightforward to see that the following general result holds.

\begin{theorem}[Efficient implementation of quantized grand couplings]
Assume that we have a random mapping representation such that
\begin{itemize}
    \item we can efficiently compute the image and pre-images of the function $f : \Omega \times \cR \rightarrow \Omega$
    \item we can efficiently compute the probabilities $\Pr(r)$
    \item we can efficiently compute the ratios $\pi(x)/\pi(y)$ for the limiting distribution. 
\end{itemize}
Then, the Kraus operators
\begin{align}
    T_r 
    &= 
    \sqrt{\Pr(r)} \sum_{x\in\Omega} D^{1/2} |x\>\<f(x,r)| D^{-1/2}.
\end{align}
of the quantized coupling $\cT$ are $|\cR|$-sparse. Thus, $\cT$ can be implemented efficiently.
\end{theorem}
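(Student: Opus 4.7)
The plan is to combine the sparse-matrix block-encoding machinery of \cite{gilyen2019quantum,camps2024explicit} with the simulation-of-tcp-maps lemma already established in this section.

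First I would unpack the support pattern of each Kraus operator. Expanding
\begin{align}
    T_r
    &=
    \sqrt{\Pr(r)} \sum_{x\in\Omega} \sqrt{\pi_x/\pi_{f(x,r)}} \, |x\>\<f(x,r)|,
\end{align}
one sees that the row indexed by $x$ has a single nonzero entry at column $f(x,r)$, while the column indexed by $y$ has nonzero entries exactly at the rows $x\in f^{-1}(y,r)$. Hence each $T_r$ has row-sparsity one and column-sparsity $\max_y |f^{-1}(y,r)|$, which is a small combinatorial quantity for the grand couplings of interest (bounded in particular by $|\cR|$ in the cases treated above). The value in the nonzero position $(x,f(x,r))$ is $\sqrt{\Pr(r)\cdot\pi_x/\pi_{f(x,r)}}$, which depends only on $\Pr(r)$ and on the ratio $\pi_x/\pi_{f(x,r)}$.

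Next I would construct a block-encoding of each $T_r$ from the three computational primitives in the hypothesis. A standard sparse-access block-encoding requires two oracles: one that enumerates, given a row or column index, the positions of the nonzero entries, and one that returns the corresponding numerical value. The forward oracle $x\mapsto f(x,r)$ supplies the row-sparsity oracle; the preimage oracle $y\mapsto f^{-1}(y,r)$ supplies the column-sparsity oracle; and the ability to compute $\Pr(r)$ together with the ratio $\pi_x/\pi_y$ supplies the value oracle. Note that the normalization constant of $\pi$ cancels exactly, just as the partition function $Z(\lambda)$ cancels in the hardcore example. Feeding these oracles into the constructions of \cite{gilyen2019quantum,camps2024explicit} produces block-encodings of the $T_r$ whose gate cost is polylogarithmic in $|\Omega|$ and polynomial in the sparsity. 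With block-encodings of all $|\cR|$ Kraus operators in hand, I would then invoke the tcp-map simulation lemma proved earlier in this section: it combines a single controlled block-encoding $W=\sum_r |r\>\<r|\otimes U_r$, with the control register prepared in a uniform superposition over $\cR$, together with oblivious amplitude amplification, to realize $\rho\mapsto\sum_{r\in\cR} T_r\rho T_r^\dagger$ after discarding the ancilla registers, using $O(\sqrt{|\cR|})$ Grover iterations. No re-weighting over $\cR$ is needed because the factor $\sqrt{\Pr(r)}$ has already been absorbed into each $T_r$.

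The main obstacle I anticipate is bookkeeping rather than a new technical idea: one must carefully match the sparse-access model used in the cited block-encoding constructions to the three computational primitives in the hypothesis, and track the precise sparsity parameter so as to justify the claim that $\cT$ can be implemented with cost scaling with $|\cR|$ together with polylog factors in $|\Omega|$. Once this identification is made explicit, the theorem follows as a direct assembly of the cited sparse block-encoding results and the earlier simulation-of-tcp-maps lemma.
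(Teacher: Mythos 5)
Your proposal is correct and follows essentially the same (largely implicit) argument as the paper, which states the theorem as ``straightforward'' and relies on exactly the ingredients you assemble: the sparsity pattern of $T_r$ with nonzero entries $\sqrt{\Pr(r)\,\pi_x/\pi_{f(x,r)}}$ computable from the three primitives, the sparse block-encoding constructions of \cite{camps2024explicit,gilyen2019quantum}, and the earlier simulation-of-tcp-maps lemma with oblivious amplitude amplification over a uniform superposition of the Kraus index. If anything, your row/column sparsity accounting (row-sparsity one, column-sparsity $\max_y |f^{-1}(y,r)|$) is sharper than the paper's blanket claim that the $T_r$ are $|\cR|$-sparse.
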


\section{Conclusions and open problems}

We have presented a novel method for efficiently preparing qsamples of stationary distributions of certain ergodic Markov chains. Our method can be turned in to an efficient quantum algorithm when there exists a grand coupling that is sparse and fast coalescing.

Our quantization serves a purpose different from Szegedy's Markov chain quantization  \cite{szegedy2004quantum}, which establishes a connection between the classical mixing time (through the spectral gap) of reversible ergodic Markov chains and the complexity of block-encoding a reflection around the qsamples (or equivalently, a projection onto the qsample). 

Let us briefly discuss how Szegedy method can be used to prepare qsamples. In general, it will be difficult to find good initial state that has sufficiently large over the qsample. Therefore, a Grover-type algorithm will take long to prepare the desired state. However, there are situations when we have a sequence of slowly varying Markov chains. Here slowly varying means that the stationary distributions of adjacent chains are close to each other in total variation distance. The latter implies that the corresponding qsamples have large overlap. This observation has been leveraged in \cite{wocjan2008speedup} to prepare qsamples.\\

We propose a few  open research problem to explore in the context of the approach presented here. 

\begin{enumerate}

    \item We have seen that not all couplings of Markov chains give rise to completely positive maps, cf.~Appendix. However, the more common ones, such as the trivial product coupling, cf. Definition~\ref{rem:indepedent_C}, and the Grand canonical coupling in Definition~\ref{rem:grand_C} do. There are considerably more general couplings that can be constructed and a very natural question to ask is, what other coupling gives rise to a tcp-map following our construction. Moreover, what couplings give rise to maps that can efficiently be implemented as a quantum circuit? 

    \item The contraction bound in Lemma ~\ref{lem:class_q_coal} reads $\tr\left(Q^\perp \cT^m(\rho_0)\right) \leq \pi_*^{-1}$ $\Pr_\mathrm{max} \{ \tau_\mathrm{coal} > m \}$. Note that since $Q^\perp$ is projector the value of $\tr\left(Q^\perp \cT^m(\rho_0)\right) \leq 1$ is actually always bounded by $1$. The prefactor $\pi_*^{-1}$, therefore seems like a substantial overestimation, that ultimately leads to an increase in the convergence time by $\log_2(\pi_*^{-1})$ when compared to the classical coupling time. The immediate question is whether this prefactor is only owed to our proof techniques and can be dropped, or whether it is indeed fundamental. 
    
    \item We have discussed the relationship and differences between our construction and the qsample reflection by using Szegedy walks. The work \cite{wocjan2023szegedy} introduced a generalization of Szegedy walk unitaries that apply to quantum detailed balanced Lindbladians, such as the Davies generator \cite{davies1976quantum}. Here, one reflects about a purification of thermal state fixed point instead of a qsample of a stationary distribution. In some sense such a purification can be seen as a generalization of qsamples. One can now ask, whether it is possible to construct a tcp-map that converges to this purification instead of the mixed state by making similar arguments as done in this work.  

\end{enumerate}

\section*{Appendix}

We now give an example of a coupling that does yield a valid tcp-map. This coupling is based on \cite[Example 5.3.1]{levin2017mixing}.

Consider a random walk on the $n$-cycle. The underlying graph of this walk has vertex set $\mathbb{Z}_n=\{0,1,\ldots,n-1\}$ and edges between $j$ and $k$ whenever $k\equiv j \pm 1$.

We consider the lazy $(p,q)$-biased walk, which remains in its current position with probability $1/2$, moves clockwise with probability $p/2$ and counterclockwise with probability $q/2$. Here $p+q=1$, and we allow the unbiased case $p=q=\frac{1}{2}$.

We construct a coupling of two particles performing lazy
walks on $\mathbb{Z}_n$. In this coupling, the two particles will never move simultaneously, ensuring that they will not jump over one another when they come to within unit distance. Until the two particles meet, at each unit of time, a fair coin is tossed, independent of all previous tosses,
to determine which of the two particles will jump. The particle that is selected makes a clockwise increment with probability $p$ and a counter-clockwise increment with probability $q$. Once the two particles collide, thereafter they make identical moves.

The formula for the Choi-Jami{\l}kowski state is 
\begin{align}
    &
    J(\cC^*_n) \\
    &=
    \sum_{x,y=0}^n |x\>\<y| \otimes \cC^*(|x\>\<y|) \\ \nonumber
    &=
    \sum_{x} |x\>\<y| \otimes \Big( 
    p |x + 1\>\<y + 1|) +
    q |x - 1\>\<y + 1|
    \Big) + \\
    &\phantom{=} 
    \sum_{x\neq y} |x\>\<y| \otimes \tfrac{1}{2} \Big( 
    p \big( |x + 1\>\<y| + |x\>\<y + 1| \big) +
    q \big( |x - 1\>\<y| + |x\>\<y - 1| \big)
    \Big)
\end{align}
It can be shown that the coupling for the unbiased case does not give rise to a valid tcp-map. This is seen by computing the Choi-Jami{\l}kowski state for a small example such as $n=3$ with the help of a simple code and noting that it has a negative eigenvalue.  The matrix is 
\begin{align}
J(\cC^*_3) 
&=
\begin{bmatrix}
0.5 & 0 & 0 & 0.5 & 0 & 0.5 & 0.5 & 0.5 & 0 \\
0 & 0.25 & 0 & 0 & 0.5 & 0 & 0 & 0 & 0.5 \\
0 & 0 & 0.25 & 0 & 0.5 & 0 & 0 & 0 & 0.5 \\
0.5 & 0 & 0 & 0.25 & 0 & 0 & 0 & 0 & 0.5 \\
0 & 0.5 & 0.5 & 0 & 0.5 & 0 & 0.5 & 0.5 & 0 \\
0.5 & 0 & 0 & 0 & 0 & 0.25 & 0 & 0 & 0.5 \\
0.5 & 0 & 0 & 0 & 0.5 & 0 & 0.25 & 0 & 0 \\
0.5 & 0 & 0 & 0 & 0.5 & 0 & 0 & 0.25 & 0 \\
0 & 0.5 & 0.5 & 0.5 & 0 & 0.5 & 0 & 0 & 0.5 \\
\end{bmatrix}
\end{align}
Its eigenvalues (rounded to $2$ digits) are given by
\begin{align}
\begin{bmatrix}
-1.0(4) & -0.3(4) & -0.3(4) & 0.2(5) & 0.2(5) & 0.2(5) & 1.0(9) & 1.0(9) & 1.7(9)
\end{bmatrix}
\end{align}
implying that the linear map $\cC^*_3$ is not completely positive.



\begin{thebibliography}{10}

\bibitem{childs2004spatial}
Andrew~M. Childs and Jeffrey Goldstone.
\newblock ``Spatial search by quantum walk''.
\newblock \href{https://dx.doi.org/10.1103/PhysRevA.70.022314}{Phys. Rev. A
  {\bf 70}, 022314}~(2004).

\bibitem{szegedy2004quantum}
Mario Szegedy.
\newblock ``Quantum speed-up of Markov chain based algorithms''.
\newblock In 45th Annual IEEE Symposium on Foundations of Computer Science.
\newblock \href{https://dx.doi.org/10.1109/FOCS.2004.53}{Pages 32--41}.
\newblock ~(2004).

\bibitem{wolf2011url}
Michael~M. Wolf.
\newblock ``Quantum channels and operations -- guided tour''.
\newblock Lecture notes~(2012).

\bibitem{verstraete2009quantum}
Frank Verstraete, Michael~M. Wolf, and J.~Ignacio Cirac.
\newblock ``Quantum computation and quantum-state engineering driven by
  dissipation''.
\newblock \href{https://dx.doi.org/10.1038/nphys1342}{Nature Physics {\bf 5},
  633--636}~(2009).

\bibitem{diehl2008quantum}
Sebastian Diehl, Andrea Micheli, Alexej Kantian, Christina Kraus, Hans~Peter
  B{\"u}chler, and Peter Zoller.
\newblock ``Quantum states and phases in driven open quantum systems with cold
  atoms''.
\newblock \href{https://dx.doi.org/10.1038/nphys1073}{Nature Physics {\bf 4},
  878--883}~(2008).

\bibitem{temme2011quantum}
Kristan Temme, Tobias~J. Osborne, Karl Gerd~H. Vollbrecht, David Poulin, and
  Frank Verstraete.
\newblock ``Quantum Metropolis sampling''.
\newblock \href{https://dx.doi.org/10.1038/nature09770}{Nature {\bf 471},
  87--90}~(2011).

\bibitem{rall2023thermal}
Patrick Rall, Chunhao Wang, and Pawel Wocjan.
\newblock ``Thermal {S}tate {P}reparation via {R}ounding {P}romises''.
\newblock \href{https://dx.doi.org/10.22331/q-2023-10-10-1132}{{Quantum} {\bf
  7}, 1132}~(2023).

\bibitem{chen2023quantum}
Chi-Fang Chen, Michael~J. Kastoryano, Fernando G. S.~L. Brandão, and András
  Gilyén.
\newblock ``Quantum thermal state preparation''~(2023).
\newblock  \href{http://arxiv.org/abs/2303.18224}{arXiv:2303.18224}.

\bibitem{jiang2024quantum}
Jiaqing Jiang and Sandy Irani.
\newblock ``Quantum Metropolis sampling via weak measurement''~(2024).
\newblock  \href{http://arxiv.org/abs/2406.16023}{arXiv:2406.16023}.

\bibitem{levin2017mixing}
David~A. Levin and Yuval Perez.
\newblock ``Markov chains and mixing times: Second edition''.
\newblock American Mathematical Society. ~(2017).
\newblock  url:~\url{https://pages.uoregon.edu/dlevin/MARKOV/markovmixing.pdf}.

\bibitem{aharonov2003adiabatic}
Dorit Aharonov and Amnon Ta-Shma.
\newblock ``Adiabatic quantum state generation''.
\newblock \href{https://dx.doi.org/10.1137/060648829}{SIAM Journal on Computing
  {\bf 37}, 47--82}~(2007).

\bibitem{propp1996exact}
James~Gary Propp and David~Bruce Wilson.
\newblock ``Exact sampling with coupled Markov chains and applications to
  statistical mechanics''.
\newblock Random Struct. Algorithms {\bf 9}, 223–252~(1996).

\bibitem{bennett1997strengths}
Charles~H. Bennett, Ethan Bernstein, Gilles Brassard, and Umesh Vazirani.
\newblock ``Strengths and weaknesses of quantum computing''.
\newblock \href{https://dx.doi.org/10.1137/S0097539796300933}{SIAM Journal on
  Computing {\bf 26}, 1510--1523}~(1997).

\bibitem{arunachalam2022simpler}
Srinivasan Arunachalam, Vojtech Havlicek, Giacomo Nannicini, Kristan Temme, and
  Pawel Wocjan.
\newblock ``Simpler (classical) and faster (quantum) algorithms for Gibbs partition functions''.
\newblock \href{https://dx.doi.org/10.22331/q-2022-09-01-789}{Quantum {\bf 6},
  789}~(2022).
\newblock  \href{http://arxiv.org/abs/2009.11270v5}{arXiv:2009.11270v5}.

\bibitem{harrow2023adaptive}
Aram~W. Harrow and Annie~Y. Wei.
\newblock ``Adaptive quantum simulated annealing for Bayesian inference and
  estimating partition functions''.
\newblock In Proceedings of the Fourteenth Annual ACM-SIAM Symposium on
  Discrete Algorithms.
\newblock \href{https://dx.doi.org/10.1137/1.9781611975994.12}{Pages 193--212}.
\newblock SIAM~(2023).

\bibitem{chakrabarti2023}
Shouvanik Chakrabarti, Andrew~M. Childs, Shih-Han Hung, Tongyang Li, Chunhao
  Wang, and Xiaodi Wu.
\newblock ``Quantum algorithm for estimating volumes of convex bodies''.
\newblock \href{https://dx.doi.org/10.1145/3588579}{ACM Transactions on Quantum
  Computing {\bf 4}, 1--60}~(2023).

\bibitem{childs2022}
Andrew~M. Childs, Tongyang Li, Jin-Peng Liu, Chunhao Wang, and Ruizhe Zhang.
\newblock ``Quantum algorithms for sampling log-concave distributions and
  estimating normalizing constants''.
\newblock In Proceedings of the 36th International Conference on Neural
  Information Processing Systems.
\newblock Pages 1--13.
\newblock NIPS '22. Curran Associates Inc.~(2022).
\newblock  url:~\url{https://papers.nips.cc/paper_files/paper/2022/hash/1686}.

\bibitem{ozgul2024}
Guneykan Ozgul, Xiantao Li, Mehrdad Mahdavi, and Chunhao Wang.
\newblock ``Stochastic quantum sampling for non-logconcave distributions and
  estimating partition functions''.
\newblock In Proceedings of the 41st International Conference on Machine
  Learning.
\newblock Volume 235 of Proceedings of Machine Learning Research, pages
  38953--38982.
\newblock PMLR~(2024).
\newblock  url:~\url{https://proceedings.mlr.press/v235/ozgul24a.html}.

\bibitem{leng2025}
Jiaqi Leng, Zhiyan Ding, Zherui Chen, and Lin Lin.
\newblock ``Operator-level quantum acceleration of non-logconcave
  sampling''~(2025).
\newblock  \href{http://arxiv.org/abs/2505.05301}{arXiv:2505.05301}.

\bibitem{wocjan2008speedup}
Pawel Wocjan and Anura Abeyesinghe.
\newblock ``Speedup via quantum sampling''.
\newblock \href{https://dx.doi.org/10.1103/PhysRevA.78.042336}{Phys. Rev. A
  {\bf 78}, 042336}~(2008).

\bibitem{temme2010chi}
Kristan Temme, Michael~J. Kastoryano, Mary~Beth Ruskai, Michael~M. Wolf, and
  Frank Verstraete.
\newblock ``The $\chi^2$-divergence and mixing times of quantum markov
  processes''.
\newblock \href{https://dx.doi.org/10.1063/1.3511335}{Journal of Mathematical
  Physics {\bf 51}, 122201}~(2010).

\bibitem{temme2017thermalization}
Kristan Temme.
\newblock ``Thermalization time bounds for Pauli stabilizer hamiltonians''.
\newblock \href{https://dx.doi.org/10.1007/s00220-016-2746-0}{Communications in
  Mathematical Physics {\bf 350}, 603--637}~(2017).

\bibitem{bardet2022hypercontractivity}
Igor Bardet and Cambyse Rouzé.
\newblock ``Hypercontractivity and logarithmic Sobolev inequality for
  non-primitive quantum Markov semigroups and estimation of decoherence
  rates''.
\newblock \href{https://dx.doi.org/10.1007/s00023-022-01196-8}{Annales Henri
  Poincaré {\bf 23}, 3839--3903}~(2022).

\bibitem{bardet2021modified}
Ivan Bardet, Ángela Capel, Angelo Lucia, David Pérez-García, and Cambyse
  Rouzé.
\newblock ``On the modified logarithmic Sobolev inequality for the heat-bath
  dynamics for 1d systems''.
\newblock \href{https://dx.doi.org/10.1063/1.5142186}{Journal of Mathematical
  Physics {\bf 62}, 061901}~(2021).

\bibitem{bardet2024entropy}
Ivan Bardet, Ángela Capel, Lin Gao, Angelo Lucia, David Pérez-García, and
  Cambyse Rouzé.
\newblock ``Entropy decay for davies semigroups of a one dimensional quantum
  lattice''.
\newblock \href{https://dx.doi.org/10.1007/s00220-023-04869-5}{Communications
  in Mathematical Physics {\bf 405}, 42}~(2024).

\bibitem{carbone2021absorption}
Raffaele Carbone and Fabio Girotti.
\newblock ``Absorption in invariant domains for semigroups of quantum
  channels''.
\newblock \href{https://dx.doi.org/10.1007/s00023-021-01016-5}{Annales Henri
  Poincaré {\bf 22}, 2497--2530}~(2021).

\bibitem{frigerio1978stationary}
Alberto Frigerio.
\newblock ``Stationary states of quantum dynamical semigroups''.
\newblock \href{https://dx.doi.org/10.1007/BF01196936}{Communications in
  Mathematical Physics {\bf 63}, 269--276}~(1978).

\bibitem{dengis2014optimal}
John Dengis, Robert König, and Fernando Pastawski.
\newblock ``An optimal dissipative encoder for the toric code''.
\newblock \href{https://dx.doi.org/10.1088/1367-2630/16/1/013023}{New Journal
  of Physics {\bf 16}, 013023}~(2014).

\bibitem{wilde2013quantum}
Mark~M. Wilde.
\newblock ``Quantum information theory''.
\newblock
  \href{https://dx.doi.org/https://doi.org/10.1017/CBO9781139525343}{Cambridge
  University Press}. ~(2013).

\bibitem{li2023simulating}
Xiantao Li and Chunhao Wang.
\newblock ``{Simulating Markovian Open Quantum Systems Using Higher-Order
  Series Expansion}''.
\newblock In Kousha Etessami, Uriel Feige, and Gabriele Puppis, editors, 50th
  International Colloquium on Automata, Languages, and Programming (ICALP
  2023).
\newblock \href{https://dx.doi.org/10.4230/LIPIcs.ICALP.2023.87}{Volume 261 of
  Leibniz International Proceedings in Informatics (LIPIcs), pages
  87:1--87:20}.
\newblock Dagstuhl, Germany~(2023). Schloss Dagstuhl -- Leibniz-Zentrum f{\"u}r
  Informatik.

\bibitem{berry14exponential}
Dominic~W. Berry, Andrew~M. Childs, Richard Cleve, Robin Kothari, and
  Rolando~D. Somma.
\newblock ``Exponential improvement in precision for simulating sparse
  hamiltonians''.
\newblock In Proceedings of the Forty-Sixth Annual ACM Symposium on Theory of
  Computing.
\newblock \href{https://dx.doi.org/10.1145/2591796.2591854}{Page 283–292}.
\newblock STOC '14New York, NY, USA~(2014). Association for Computing
  Machinery.

\bibitem{camps2024explicit}
Daan Camps, Lin Lin, Roel Van~Beeumen, and Chao Yang.
\newblock ``Explicit quantum circuits for block encodings of certain sparse
  matrices''.
\newblock \href{https://dx.doi.org/10.1137/22M1484298}{SIAM Journal on Matrix
  Analysis and Applications {\bf 45}, 801--827}~(2024).
\newblock
  \href{http://arxiv.org/abs/https://doi.org/10.1137/22M1484298}{arXiv:https://doi.org/10.1137/22M1484298}.

\bibitem{gilyen2019quantum}
András Gilyén, Yuan Su, Guang~Hao Low, and Nathan Wiebe.
\newblock ``Quantum singular value transformation and beyond: exponential
  improvements for quantum matrix arithmetics''.
\newblock In Proceedings of the 51st Annual ACM SIGACT Symposium on Theory of
  Computing.
\newblock \href{https://dx.doi.org/10.1145/3313276.3316366}{Pages 193--204}.
\newblock STOC 2019New York, NY, USA~(2019). Association for Computing
  Machinery.

\bibitem{wocjan2023szegedy}
Pawel Wocjan and Kristan Temme.
\newblock ``Szegedy walk unitaries for quantum maps''.
\newblock \href{https://dx.doi.org/10.1007/s00220-023-04797-4}{Communications
  in Mathematical Physics {\bf 402}, 3201--3231}~(2023).

\bibitem{davies1976quantum}
E.~B. Davies.
\newblock ``Quantum theory of open systems''.
\newblock Academic Press. ~(1976).

\end{thebibliography}

\end{document}